\documentclass[letterpaper, 10 pt, conference]{ieeeconf}  

\IEEEoverridecommandlockouts                              

\usepackage{graphics} 
\usepackage{epsfig} 
\usepackage{times} 
\usepackage{amsmath} 
\usepackage{amssymb}  
\usepackage{subfig}
\usepackage{graphicx}
\usepackage{epstopdf}
\usepackage{xcolor}
\usepackage{fancyhdr}

\newtheorem{assumption}{Assumption}
\newtheorem{lemma}{Lemma}
\newtheorem{definition}{Definition}
\newtheorem{remark}{Remark}
\newtheorem{proposition}{Proposition}
\newtheorem{theorem}{Theorem}
\newtheorem{rules}{Rule}

\title{\LARGE \bf
Disturbance Rejection Control under Nested Signal Temporal Logic Specifications: A Recursive Design Approach
}

\author{Yuzhang Peng, Jiaqi Yan, Kurui Ze, Wei Wang, and Yu Pan%
\thanks{*This work was supported by National Natural Science Foundation of China under Grants 62573023, 625B2017 and 62373019,
the Open Research Project of the State Key Laboratory of Industrial Control Technology, China, under Grant ICT2026B14,
and in part by the Fundamental Research Funds for the Central Universities under Grants 501RCQD2025103003 and 501XSKC2025103001.}
\thanks{Yuzhang Peng and Kurui Ze are with the School of Automation Science and Electrical Engineering, Beihang University, Beijing 100191, China.
        {\tt\small pengyz@buaa.edu.cn}; {\tt\small kr\_ze@buaa.edu.cn}}
\thanks{Jiaqi Yan and Wei Wang are with the School of Automation Science and Electrical Engineering, Beihang University, Beijing 100191, China and the Hangzhou Innovation Institute, Beihang University, Hangzhou 310051, China.
        {\tt\small jqyan@buaa.edu.cn}; {\tt\small w.wang@buaa.edu.cn}}
\thanks{Yu Pan is with the State Key Laboratory of Industrial Control Technology, Zhejiang University, Hangzhou 310027, China.
        {\tt\small ypan@zju.edu.cn}}
}

\begin{document}
\bstctlcite{IEEEexample:BSTcontrol}

\maketitle

\thispagestyle{fancy}
\pagestyle{empty}

\fancyhf{}
\fancyhead[C]{\small\textit{Accepted for presentation at the 65th IEEE Conference on Decision and Control (CDC), Honolulu, Hawaii, USA, December 15--18, 2026}}
\renewcommand{\headrulewidth}{0pt}

\begin{abstract}
For the control synthesis problem under signal temporal logic (STL) specifications, control barrier functions (CBFs) serve as an effective method. However, traditional CBF approaches are severely restricted in expressiveness, particularly failing to
encode nested formulas containing multiple temporal operators. While recent methods based on reachability analysis attempt to address this, they incur a heavy computational burden and rely strictly on known system dynamics. To overcome this challenge, this paper investigates a CBF-based recursive control scheme for nested STL specifications under uncertain disturbances. Within this scheme, we introduce a novel recursive CBF design procedure guided by a modified STL tree (sTLT) to yield explicit, parameterized CBFs without heavy computational demands.
To render the proposed recursive CBF design applicable to systems subject to uncertain disturbances, we further integrate a novel reconstructed CBF-based quadratic programming (QP) controller. This controller requires no prior knowledge of the disturbances while relaxing initial safety assumptions. The proposed recursive control synthesis framework is proven to effectively encode nested STL specifications while ensuring that the system satisfies the specifications under unknown disturbances, as supported by simulation results.
\end{abstract}

\section{INTRODUCTION}
As formal languages, temporal logics such as linear temporal logic (LTL) and signal temporal logic (STL) provide a comprehensive yet rigorous specification framework to constrain system behaviors~\cite{Vardi2005,Maler2004}. These formal methods have garnered widespread attention by enabling the description of highly complex mission objectives for unmanned systems, including robotics and autonomous vehicles.

In contrast to LTL, STL provides quantitative semantics for spatial-temporal properties, enabling more refined task specifications. A common approach involves encoding STL specifications as mixed-integer constraints to formulate a mixed-integer linear programming (MILP) problem within a model predictive control (MPC) framework~\cite{Raman2014}. However, such MILP-based approaches incur a significant computational burden, particularly under long time horizons or complex STL formulas.
To overcome the heavy computational burden, several frameworks have been developed for control synthesis under STL specifications~\cite{Takayama2025,Lai2025}. However, these methods either remain restricted to discrete-time systems~\cite{Takayama2025} that inherently risk inter-sample constraint violations or necessitate solving non-convex nonlinear programs~\cite{Lai2025}.

In recent years, control barrier functions (CBFs) have been introduced to solve the control synthesis problem for continuous-time systems under STL specifications~\cite{Lindemann2019a,Lindemann2020,Zhou2025}. In CBF-based approaches, STL specifications are encoded into a set of time-varying CBF constraints, the satisfaction of which guarantees the fulfillment of the STL tasks. Besides, these CBF constraints are integrated into a quadratic programming (QP) controller, yielding high computational efficiency for the CBF-QP scheme.

However, CBF-based methods are incapable of encoding arbitrary STL specifications. In the pioneering CBF-based approaches for STL specifications~\cite{Lindemann2019a}, specific CBF constructions were proposed for the \textit{Always} operator, the \textit{Eventually} operator and \textit{Conjunction}. Almost all subsequent methods have inherited this original design, rendering them incapable of encoding \textit{nested STL formulas}~\cite{Lindemann2020,Zhou2025}, in which multiple temporal operators are nested. Consequently, the applicable scope of STL formulas for existing CBF-based frameworks remains highly limited.
 
To broaden the applicable scope of CBF-based methods, several recent works have conducted in-depth explorations. Reference \cite{Buyukkocak2022} considers a richer fragment of STL specifications encompassing conflicting requirements, providing heuristics on feasible subtask ordering to facilitate CBF design. A constructive CBF approach applicable to nearly arbitrary STL formulas is proposed in~\cite{Yu2024}. By formulating an STL tree (sTLT) to guide the CBF design, this method reveals the connection between nested STL specifications and system state constraints based on system reachable sets.

The associated construction procedure in~\cite{Yu2024} relies on Hamilton-Jacobi-Bellman (HJB) reachability analysis to compute reachable sets, thereby yielding numerical CBFs for nonlinear systems. 
Nevertheless, the HJB reachability analysis relies on precise system models, rendering it difficult to construct accurate reachable sets and appropriate CBFs for uncertain systems. Since practical systems are inevitably subject to unknown disturbances, the applicability of the method in~\cite{Yu2024} is limited. Moreover, even if one attempts to rigorously incorporate disturbances into the HJB reachability analysis, such an extension would demand prior knowledge of the disturbances (e.g., their upper bounds) and require treating them as additional optimization variables, which exacerbates the inherently high computational burden~\cite{Yu2024}. These challenges drive the development of explicit parameterized CBFs~\cite{Marchesini2025} for nested STL formulas, offering a computationally lightweight solution unencumbered by the system dynamics.

Motivated by these limitations, we propose a control synthesis scheme for nested STL specifications under uncertain disturbances. Our primary contributions are summarized as follows:
\begin{itemize}
  \item In contrast to the numerical CBFs derived via reachability analysis in~\cite{Yu2024}, a novel recursive CBF design procedure based on a modified sTLT is proposed to yield explicit parameterized CBFs, which is applicable to uncertain systems. Through sliding window variables, the complex temporal relationships in nested STL formulas are captured. Besides, satisfying the CBF constraints is proven to guarantee the fulfillment of the nested STL task.
  \item To address external disturbances present within the system dynamics, a disturbance rejection QP controller based on reconstructed CBFs is proposed. In contrast to existing disturbance rejection frameworks, e.g.,~\cite{Zhou2025,Wang2023,Sun2024}, the proposed approach relaxes initial safety assumptions while requiring no prior knowledge of the disturbances.
\end{itemize}

The rest of this paper is structured as follows: Section \ref{sec:2} provides the preliminaries and describes the problem. Section \ref{sec:3} presents the proposed disturbance rejection control scheme for continuous-time systems under nested STL specifications. Section \ref{sec:4} demonstrates the effectiveness of the proposed method through a simulation example. A conclusion is drawn in Section \ref{sec:5}.

\textbf{Notation}: 
The sets of real numbers and positive integers are denoted by $\mathbb{R}$ and $\mathbb{Z}^+$, respectively. $\mathbb{R}^{n}$ denotes the $n$-dimensional real vector space and $\mathbb{R}^{m \times n}$ denotes the $m \times n$-dimensional real matrix space. The symbol $\lceil \cdot \rceil$ stands for the ceiling operator mapping a real number to the least integer greater than or equal to that number.

\section{PRELIMINARIES AND PROBLEM STATEMENT}\label{sec:2}
In this section, we provide the preliminaries on STL, the system model, and CBFs. Subsequently, the control objective of this work is presented.

\subsection{Signal Temporal Logic}
STL~\cite{Maler2004} is a specification language built upon predicates $\mu$, each of which is derived by evaluating a continuously differentiable predicate function $h_\mu: \mathbb{R}^n \rightarrow \mathbb{R}$ as
\begin{equation}
    \mu:= \begin{cases}
    \top ,& \text{if} \, \, h_\mu(x) \geq 0 \\ 
    \bot , & \text{if} \, \, h_\mu(x)<0 
\end{cases}
\end{equation}
The STL specifications considered in this work are constructed according to the following syntax:
\begin{equation}
\label{eq:STL_syntax}
  \phi::=\top|\mu|\neg\mu|\phi_1\wedge\phi_2|\mathrm{G}_{[t_a,t_b]}\phi|\phi_1\mathrm{U}_{[t_a,t_b]}\phi_2,
\end{equation}
where $\phi_1, \phi_2$ are STL formulas and $t_a$, $t_b$ are non-negative constants with $t_a\leq t_b$. Here, $\neg$ and $ \wedge$ represent logic operators \textit{Negation} and \textit{Conjunction}, while $\mathrm{G}_{[t_a,t_b]}\phi$ and $\mathrm{U}_{[t_a,t_b]}$ are temporal operators \textit{Always} and \textit{Until} defined on the interval $[t_a,t_b]$, respectively. 

The satisfaction of an STL specification $\phi$ by a continuous-time signal $x(t):\mathbb{R}^{\geq 0}\rightarrow \mathbb{R}^n$, starting from time $t$, is denoted by $(x,t)\models \phi$. The semantics of STL are given by the following recursive rules:
\begin{align}
  \label{eq:STL_sematics}
  &(x,t)\models\mu  &&\Leftrightarrow h_\mu(x(t))\geq0, \notag \\
  &(x,t)\models\neg\mu  &&\Leftrightarrow\neg((x,t)\models\mu), \notag \\
  &(x,t)\models\phi_1\wedge\phi_2  &&\Leftrightarrow(x,t)\models\phi_1\wedge(x,t)\models\phi_2, \notag \\
  &(x,t)\models \mathrm{G}_{[t_a,t_b]}\phi  &&\Leftrightarrow\forall t_{1}\in[t+t_a,t+t_b],(x,t_{1})\models\phi ,\notag \\
  &(x,t)\models\phi_1\mathrm{U}_{[t_a,t_b]}\phi_2  \! \! \! \! \! \! \! \! \! && \Leftrightarrow\left\{\exists t_1\in[t+t_a,t+t_b],(x,t_1) \models\phi_2\right\} \notag \\
  &&&\wedge\left\{\forall t_2\in[t,t_1],(x,t_2)\models\phi_1\right\} , \notag \\
  &(x,t)\models \mathrm{F}_{[t_a,t_b]}\phi  &&\Leftrightarrow\exists t_1\in[t+t_a,t+t_b],(x,t_1)\models\phi,
\end{align}
where $\mathrm{F}_{[t_a,t_b]}$ is the temporal operator \textit{Eventually} defined on the interval $[t_a,t_b]$ and $\mathrm{F}_{[t_a,t_b]}\phi$ can be represented by $\top \mathrm{U}_{[t_a,t_b]}\phi$ according to syntax (\ref{eq:STL_syntax}). 

Moreover, an STL formula is considered \textit{Nested} when its temporal operators are applied to subformulas inherently containing other temporal operators~\cite{Yu2024}, such as $\mu_1\mathrm{U}_{[t_{a,1},t_{b,1}]}(G_{[t_{a,2},t_{b,2}]}\mu_2)$ and $\mathrm{F}_{[t_{a,1},t_{b,1}]}(\mu_1 \wedge \mathrm{F}_{[t_{a,2},t_{b,2}]}\mu_2)$.

\subsection{System Model and Control Barrier Functions}
Consider a class of nonlinear systems described by:
\begin{equation}
\label{eq:sys}
        \dot x = f(x)+g(x)u + d,
\end{equation}
where $x \in \mathbb{R}^n$ is the system state, $u \in \mathbb{R}^m$ is the control input and $d \in \mathbb{R}^n$ is the unknown external disturbance. The functions $f: \mathbb{R}^n \rightarrow \mathbb{R}^n$ and $g: \mathbb{R}^m \rightarrow \mathbb{R}^{n \times m}$ are locally Lipschitz continuous.

For system (\ref{eq:sys}), consider a trajectory $z(t): \mathbb{R} \rightarrow \mathbb{R}^q$ with time derivative $\dot z$ and a continuously differentiable function $h(x,z): \mathbb{R}^n \times \mathbb{R}^q \rightarrow \mathbb{R}$. Then, the zero superlevel set of the function $h(x,z)$ is defined as the time-varying safe set $\mathcal{C}(t)$, given by
\begin{equation}
\label{eq:TV_safe_set}
        \mathcal{C}(t) = \{x(t) \in \mathbb{R}^n | h(x,z) \geq 0 \}.
\end{equation}
We further define $\mathrm{Int}(\mathcal{C}(t))$ and $\partial\mathcal{C}(t)$ as the interior and boundary of $\mathcal{C}(t)$, respectively, where 
\begin{align}
  & \mathrm{Int}(\mathcal{C}(t)):=\{x(t)\in\mathbb{R}^{n} | h(x,z)>0\}, \notag \\
  & \partial\mathcal{C}(t):=\{x(t)\in\mathbb{R}^{n} | h(x,z)=0\}.
\end{align}

Consequently, the definition of CBF is introduced below.
\begin{definition} [Zeroing CBFs, ZCBFs~\cite{Ames2017}] \!
  The continuously differentiable function $h(x,z)$ is a ZCBF if there exists an extended class $\mathcal{K}$ function $\alpha(\cdot)$ for all $x(t) \in \mathcal{C}(t)$, such that
  \begin{align}
    \label{eq:uzcbf}
    \sup_{u\in\mathbb{R}^m} & \left[\frac{\partial h}{\partial x}\left(f(x) + g(x)u + d\right) + \frac{\partial h}{\partial z} \dot z \right] \geq -\alpha(h(x,z)).
  \end{align}
\end{definition}
The set of control inputs $u$ that satisfies (\ref{eq:uzcbf}) is defined as $\mathcal{U}_{Z}$. Then, the following lemma is presented for ZCBFs:
\begin{lemma}[\cite{Ames2017}] 
  \label{lem:1} 
  For a given ZCBF $h(x,z)$ and system (\ref{eq:sys}), if the control input satisfies $u\in\mathcal{U}_Z$, the time-varying safe set $\mathcal{C}(t)$ is forward invariant.
\end{lemma}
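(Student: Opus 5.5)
The plan is a comparison argument along the closed-loop trajectory. Fix an initial time $t_0$ with $x(t_0)\in\mathcal{C}(t_0)$, and let $u(t)\in\mathcal{U}_Z$ be applied for all $t\ge t_0$. Assume $u(\cdot)$ is such that the closed-loop solution $x(t)$ of (\ref{eq:sys}) exists and is absolutely continuous on its interval of existence; for instance, take $u$ locally Lipschitz in $(x,z)$ and $d$, $z$, $\dot z$ piecewise continuous. Define the scalar function $\eta(t):=h(x(t),z(t))$. Since $h$ is continuously differentiable and $x$, $z$ are absolutely continuous, $\eta$ is absolutely continuous, and for almost every $t$
\begin{equation*}
\dot\eta(t)=\frac{\partial h}{\partial x}\bigl(f(x)+g(x)u+d\bigr)+\frac{\partial h}{\partial z}\dot z .
\end{equation*}
The condition $u\in\mathcal{U}_Z$ in (\ref{eq:uzcbf}) then gives the differential inequality $\dot\eta(t)\ge-\alpha(\eta(t))$, valid at least while $x(t)\in\mathcal{C}(t)$, i.e. while $\eta(t)\ge 0$.

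Next, compare $\eta$ with the solution $y(t)$ of the scalar initial value problem $\dot y=-\alpha(y)$, $y(t_0)=\eta(t_0)\ge 0$. Because $\alpha$ is an extended class $\mathcal{K}$ function, we have $\alpha(0)=0$, so $y\equiv 0$ is an equilibrium. Since $-\alpha(y)\le 0$ for $y\ge0$, the solution starting at $y(t_0)\ge 0$ stays in $[0,y(t_0)]$. If $\alpha$ is locally Lipschitz, uniqueness of solutions means $y$ cannot cross zero. If $\alpha$ is merely continuous, I would use the minimal solution, or rely directly on the sign argument below. The comparison lemma then yields $\eta(t)\ge y(t)\ge 0$ for all $t\ge t_0$, which means $x(t)\in\mathcal{C}(t)$.

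The main obstacle is that (\ref{eq:uzcbf}) is only assumed for $x\in\mathcal{C}(t)$, so the inequality on $\eta$ cannot simply be integrated everywhere. I would handle this by contradiction. Suppose $\eta(t_1)<0$ for some $t_1>t_0$, and let $t^*=\sup\{t\in[t_0,t_1]:\eta(t)\ge 0\}$. By continuity $\eta(t^*)=0$, and $\eta<0$ on $(t^*,t_1]$. To obtain a derivative bound on a neighborhood of the boundary $\partial\mathcal{C}(t)$, either assume, as is standard, that the ZCBF condition holds on an open set $\mathcal{D}\supset\mathcal{C}(t)$, or use continuity of the left-hand side of (\ref{eq:uzcbf}) in $x$. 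With such a bound we get $\dot\eta\ge-\alpha(\eta)>0$ whenever $\eta<0$, because $\alpha(\eta)<0$ for $\eta<0$. Hence $\eta$ is nondecreasing on $(t^*,t_1]$, so $\eta(t_1)\ge\eta(t^*)=0$, a contradiction. Therefore $\mathcal{C}(t)$ is forward invariant.

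Two points have to be checked in this argument: that the unknown $d$ enters (\ref{eq:uzcbf}) pointwise, so the inequality is genuinely satisfied along the actual trajectory, and that the closed-loop solution exists forward in time.
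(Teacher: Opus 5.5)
The paper does not prove this lemma; it cites \cite{Ames2017}, where the barrier inequality is posed on a set $\mathcal{D}\supseteq\mathcal{C}$ and the controller is required to satisfy it on $\mathcal{D}$. Your contradiction argument is a valid elementary version of the standard proof under such a hypothesis. However, one of the two ways you propose to get a derivative bound for $\eta<0$ fails.

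Continuity of the left-hand side of \eqref{eq:uzcbf} in $x$ only carries the boundary bound $\dot\eta\ge-\alpha(0)=0$ into a neighborhood, up to an arbitrarily small negative error. It never yields $\dot\eta\ge-\alpha(\eta)>0$ outside $\mathcal{C}(t)$. In fact no argument can work if $u\in\mathcal{U}_Z$ is imposed only while $x(t)\in\mathcal{C}(t)$. Take $n=1$, $f=0$, $g=0$, $d\equiv 1$ and $h(x)=-x^2$. Then $\mathcal{C}(t)=\{0\}$, and at $x=0$ the left-hand side of \eqref{eq:uzcbf} is $0=-\alpha(0)$. So $h$ is a ZCBF in the paper's sense and every $u$ is admissible there, yet $x(t)=t$ leaves $\mathcal{C}(t)$ at once.

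So you must commit to a stronger reading of the hypothesis. The cleanest option is to treat $u\in\mathcal{U}_Z$ as a pointwise condition at the current state for all $t\ge t_0$, inside or outside $\mathcal{C}(t)$; this is exactly what \eqref{eq:qp} enforces. Then $\dot\eta\ge-\alpha(\eta)>0$ holds a.e.\ on $(t^*,t_1]$, so $\eta(t_1)\ge\eta(t^*)=0$, and the contradiction closes. No set $\mathcal{D}$ and no comparison lemma are needed, so your comparison paragraph becomes superfluous. This reading is consistent with the example: for $x>0$ the constraint becomes $-2x\ge-\alpha(-x^2)>0$, so $\mathcal{U}_Z=\emptyset$, and it is the hypothesis, not the conclusion, that fails.

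If you prefer the $\mathcal{D}$ route, feasibility of \eqref{eq:uzcbf} on an open $\mathcal{D}(t)\supset\mathcal{C}(t)$ is not enough. You also need the applied input to satisfy the inequality there, and you must shrink $t_1$ toward $t^*$ so that $x(t)\in\mathcal{D}(t)$ on $[t^*,t_1]$ before using it.
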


\subsection{Problem Statement}

Current CBF-based methods~\cite{Lindemann2019a,Lindemann2020} are limited to non-nested STL formulas. While the approach in~\cite{Yu2024} accommodates nested STL specifications, the reachability analysis technique it employs is difficult to apply to the uncertain system (\ref{eq:sys}) considered herein.

Based on the limitations observed above, our control objective is to synthesize a controller for the system (\ref{eq:sys}) such that it satisfies the nested STL specifications in (\ref{eq:STL_syntax}). The specific objectives are summarized as follows:
\begin{itemize}
        \item Propose a design procedure for the CBF $h_{\phi}(x,t)$ corresponding to a given STL formula $\phi$, such that if the time-varying safe set $\mathcal{C}(t)$ associated with $h_{\phi}(x,t)$ is forward invariant over $[0,T)$, then $(x,0) \models \phi$, where $T$ is the terminal time determined by $\phi$;
        \item Synthesize a disturbance rejection controller for system (\ref{eq:sys}) to guarantee the forward invariance of the time-varying safe set $\mathcal{C}(t)$.
\end{itemize}

Achieving the control objective necessitates the following assumptions.

\begin{assumption}
\label{ass:1}
    The external disturbance $d$ in (\ref{eq:sys}) is bounded, i.e., $||d|| \leq D$, where $D$ is an \textit{unknown} positive constant.
\end{assumption}
\begin{assumption}
  \label{ass:2}
  The derivatives of all predicate functions $h_\mu(x)$, i.e., $\frac{\text{d} h_{\mu}(x)}{\text{d} x}$, are globally Lipschitz continuous.
\end{assumption}

The following lemmas present fundamental results that will be utilized to develop our subsequent methods.
\begin{lemma}[\cite{Gilpin2021}]
\label{lem:2}
        Consider a conjunction of $p$ CBFs $h_{\psi_k}(x,t)$, where $\psi_k$ $(k = 1,\dots,p)$ are STL formulas. It follows that
        \begin{align}
          \widehat {\min} \left(h_{\psi_1},\dots,h_{\psi_p} | \kappa\right)  & = -\frac{1}{\kappa} \ln\left(\sum_{k=1}^{p}\exp\left(-\kappa h_{\psi_k}\right) \right)  \notag \\
          & \leq \min \left(h_{\psi_1},\dots,h_{\psi_p}\right), 
        \end{align}
        \begin{equation}
          \lim_{\kappa \to +\infty} \widehat {\min} \left(h_{\psi_1},\dots,h_{\psi_p} | \kappa\right) = \min \left(h_{\psi_1},\dots,h_{\psi_p}\right),
        \end{equation}
where $\kappa$ is a positive constant.
\end{lemma}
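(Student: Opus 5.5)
The plan is to prove the two claims of Lemma~\ref{lem:2} directly from the definition of the log-sum-exp function. Write $m=\min(h_{\psi_1},\dots,h_{\psi_p})$ and fix an index $k^\ast$ with $h_{\psi_{k^\ast}}=m$. For every $k$ we have $h_{\psi_k}\ge m$, hence $\exp(-\kappa h_{\psi_k})\le \exp(-\kappa m)$. The sum $\sum_{k=1}^{p}\exp(-\kappa h_{\psi_k})$ contains the term $\exp(-\kappa m)$ and all other terms are positive, so we obtain the two-sided bound
\begin{equation*}
\exp(-\kappa m)\le \sum_{k=1}^{p}\exp(-\kappa h_{\psi_k})\le p\,\exp(-\kappa m).
\end{equation*}

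Next we apply the strictly increasing function $\ln(\cdot)$ and multiply by $-1/\kappa<0$, which reverses the inequalities. This yields
\begin{equation*}
m-\frac{\ln p}{\kappa}\le \widehat{\min}\left(h_{\psi_1},\dots,h_{\psi_p}\,|\,\kappa\right)\le m.
\end{equation*}
The right-hand inequality is exactly the first claim of the lemma.

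For the limit, observe that $p$ is fixed and $\ln p\ge 0$, so $\ln p/\kappa\to 0$ as $\kappa\to+\infty$. The squeeze theorem applied to the sandwich above then gives $\widehat{\min}\to m$. The estimate holds pointwise in $(x,t)$, and since $\ln p/\kappa$ does not depend on $(x,t)$, the convergence is in fact uniform.

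There is no real obstacle in this argument. The only steps that need care are the sign reversal when multiplying by $-1/\kappa$ and the observation that the error term $\ln p/\kappa$ is independent of the values $h_{\psi_k}$. Continuity or differentiability of the $h_{\psi_k}$ plays no role in this pointwise argument.
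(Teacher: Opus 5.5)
Your proof is correct and complete. The bound $\exp(-\kappa m)\le\sum_{k}\exp(-\kappa h_{\psi_k})\le p\,\exp(-\kappa m)$, followed by the decreasing map $y\mapsto-\frac{1}{\kappa}\ln y$, gives $m-\frac{\ln p}{\kappa}\le\widehat{\min}\le m$. The right inequality is the first claim, and the squeeze gives the limit.

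The paper does not prove this lemma itself; it only cites \cite{Gilpin2021}. Your sandwich estimate is the standard argument behind that cited result. Your write-up also makes explicit that the gap $\ln p/\kappa$ is independent of the state. This gives uniform convergence and shows directly how large $\kappa$ must be to limit the conservatism of the conjunction CBF.
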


\section{MAIN RESULTS} \label{sec:3}
In this section, we present a design procedure of CBFs for nested STL formulas. Simultaneously, a robust controller is synthesized based on a CBF-QP framework.

\subsection{Recursive Design of CBFs along Modified sTLT}
Inspired by~\cite{Takayama2025,Yu2024}, we construct a modified sTLT for the given STL specifications. The specific structure of this tree is then utilized to guide the design of the corresponding CBF. 


\begin{definition}[Modified sTLT] 
\label{def:3}
  A modified sTLT is defined as a directed tree $\mathcal{T}_{\phi} := \left(\mathcal{N}_{\phi}, \mathcal{E}_{\phi}\right)$ satisfying the following properties:
  \begin{itemize}
    \item [i)] $\mathcal{N}_{\phi}$ denotes the set of formula nodes while $\mathcal{E}_{\phi} := \{\left(\psi_1,\psi_2\right)_{\mathfrak{O} } \mid \mathfrak{O} \in \{\wedge,\mathrm{G}_{[t_a,t_b]},\mathrm{F}_{[t_a,t_b]} \}, \psi_1,\psi_2 \in \mathcal{N}_{\phi}\}$ denotes the set of edges;
    \item [ii)] if the edge $\left(\psi_1,\psi_2\right)_{\mathrm{T}_{[t_a,t_b]}} \in \mathcal{E}_{\phi}$ with $\mathrm{T}_{[t_{a},t_{b}]} \in \{\mathrm{G}_{[t_{a},t_{b}]}, \mathrm{F}_{[t_{a},t_{b}]}\}$, then $\psi_1 = \mathrm{T}_{[t_a,t_b]}\psi_2$;
    \item [iii)] if the edges $\left(\psi,\psi_1\right)_{\wedge},\dots, \left(\psi,\psi_p\right)_{\wedge} \in \mathcal{E}_{\phi}$, then $\psi = \bigwedge _{k=1}^p \psi_k $;
    \item [iv)] the in-degree of every node in $\mathcal{T_{\phi}}$ is at most $1$, and for any edge $\left(\psi_1,\psi_2\right)_{\mathfrak{O} } \in \mathcal{E}_{\phi}$, $\psi_1$ is defined as the parent node of $\psi_2$ while $\psi_2$ is the child node of $\psi_1$.
  \end{itemize}
  Consequently, the root node of $\mathcal{T}_{\phi}$ inherently maps to a specific STL formula $\phi$.
\end{definition}

In contrast to the sTLT in~\cite{Yu2024} which defines nodes via system reachable sets, the modified sTLT relies entirely on pure STL formulas, with its hierarchical structure subsequently guiding the explicit CBF design.
Note that condition iv) in Definition \ref{def:3} does not hinder the construction of $\mathcal{T_{\phi}}$ for a given STL formula $\phi$, since it can always be met by introducing redundant elements into $\mathcal{N}_{\phi}$. Moreover, we can deduce from Definition \ref{def:3} that the leaves of $\mathcal{T_{\phi}}$ are predicates in $\phi$. An example of a modified sTLT construction is shown in Fig. \ref{fig:sTLTa}.

As implied by Definition \ref{def:3}, a modified sTLT cannot be directly constructed for all STL formulas given in \eqref{eq:STL_syntax}. To enable the construction of a modified sTLT for a given STL formula while simultaneously simplifying its resulting structure (i.e., reducing the number of nodes or edges to facilitate the subsequent CBF design), the original formula needs to be transformed into a desired form as follows:

\begin{definition}[STL formula in desired form]
  A given STL formula $\phi$ is said to be in the desired form if:
  \begin{itemize}
    \item [i)] it contains no \textit{Until} operators;
    \item [ii)] its logical operators do not operate directly on predicates;
    \item [iii)] it contains no subformulas $\mathrm{G}_{[t_a,t_b]}\left(\psi_1 \wedge \psi_2 \right) $ and $\mathrm{G}_{[t_a,t_b]}\mu_1 \wedge \mathrm{G}_{[t_a,t_b]}\mu_2$, where $\psi_1, \psi_2$ are subformulas and $\mu_1, \mu_2$ are predicates;
    \item [iv)] it contains no subformulas $\mathrm{T}_{[t_{a,1},t_{b,1}]}\mathrm{T}_{[t_{a,2},t_{b,2}]}\psi$ wherein the successive operators $\mathrm{T}$ are identical, while $\psi$ represents a subformula.

  \end{itemize}
  
\end{definition}

To transform an STL formula into the desired form, the specific transformation rules are described as follows.

\begin{rules}
  \label{rul:1}
  For $\psi_1\mathrm{U}_{[t_a,t_b]}\psi_2$ contained in $\phi$, we rewrite it into $\mathrm{G}_{[0,t_b]}\psi_1 \wedge \mathrm{F}_{[t_a,t_b]}\psi_2$.
\end{rules}

\begin{rules}
  \label{rul:2}
  For $\lnot \mu$ and $\bigwedge^p_{k=1} \mu_k$ contained in $\phi$, where $\mu$ and $\mu_k$ $(k=1,\dots,p)$ are predicates with predicate functions $h_\mu$ and $h_{\mu_k}$, respectively, we replace them with $\mu_\lnot$ and $\mu_\wedge$, respectively. The predicate functions corresponding to $\mu_\lnot$ and $\mu_\wedge$ are given by $h_{\mu_\lnot}:=-h_\mu$ and $h_{\mu_\wedge}:=\widehat {\min}\left(h_{\mu_1},\dots,h_{\mu_p}| \kappa \right) $, respectively.
\end{rules}

\begin{rules}
  \label{rul:3}
  For $\phi_1:=\mathrm{G}_{[t_a,t_b]}\left(\psi_1 \wedge \psi_2 \right) $ and $\phi_2:=\mathrm{G}_{[t_a,t_b]}\psi_1 \wedge \mathrm{G}_{[t_a,t_b]}\psi_2$ contained in $\phi$, $\phi_1$ is equivalent to $\phi_2$.
\end{rules}

\begin{rules}
  \label{rul:4}
  For $\mathrm{T}_{[t_{a,1},t_{b,1}]}\mathrm{T}_{[t_{a,2},t_{b,2}]}\psi$ contained in $\phi$ featuring identical consecutive temporal operators, we rewrite it into $\mathrm{T}_{[t_{a,1}+t_{a,2},t_{b,1}+t_{b,2}]}\psi$.
\end{rules}

By applying the transformation Rule \ref{rul:1}-\ref{rul:4}, the given formula $\phi$ is transformed into the desired form. To alleviate the conservatism introduced by translating STL formulas into CBFs, an additional transformation rule is appended below as demonstrated in~\cite{Marchesini2025}.

\begin{rules}\label{rul:5}
  For $\mathrm{G}_{[t_{a,1},t_{b,1}]}\mathrm{F}_{[t_{a,2},t_{b,2}]}\psi$ in $\phi$, if $t_{a,2} \neq t_{b,2}$, we reformulate it as $\bigwedge_{i=1}^{p_f}\mathrm{F}_{[w_i,w_i]}\psi$, where
  \begin{align}
    \label{eq:GF1}
    &w_{i} = w_{i-1} + \delta_i (t_{b,2}-t_{a,2}), \, i=1,\dots,p_f, \\
    &w_0 = t_{a,1} + t_{a,2},
  \end{align}
  with $p_f\geq \left\lceil \frac{t_{b,1}-t_{a,1}}{t_{b,2}-t_{a,2}}\right\rceil$ and $\delta_i \in [\frac{t_{b,1}-t_{a,1}}{p_f(t_{b,2}-t_{a,2})},1]$.
\end{rules}

\begin{proposition}[\cite{Marchesini2025}] \label{pro:6}
  Let STL formula $\phi := \mathrm{G}_{[t_{a,1},t_{b,1}]}\mathrm{F}_{[t_{a,2},t_{b,2}]}\psi $ with $t_{a,2} \neq t_{b,2}$. Using Rule \ref{rul:5}, $\phi$ is transformed into $\varphi:=\bigwedge_{i=1}^{p_f}\mathrm{F}_{[w_i,w_i]}\psi$. If $(x,t)\models \varphi$ holds, then $(x,t)\models \phi$.
\end{proposition}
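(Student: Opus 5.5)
The plan is to argue directly from the semantics \eqref{eq:STL_sematics} and reduce the claim to a covering property of the time instants $w_i$ in Rule \ref{rul:5}. Write $L := t_{b,2}-t_{a,2} > 0$. Suppose $(x,t)\models\varphi$; by the semantics of conjunction and of $\mathrm{F}_{[w_i,w_i]}$, this means $(x,t+w_i)\models\psi$ for every $i=1,\dots,p_f$. To conclude $(x,t)\models\phi$, I must show the following. For every $t_1\in[t+t_{a,1},t+t_{b,1}]$ there is some $t_2\in[t_1+t_{a,2},t_1+t_{b,2}]$ with $(x,t_2)\models\psi$. I will take $t_2 = t+w_i$ for a suitable index $i$.

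Setting $s := t_1 - t\in[t_{a,1},t_{b,1}]$, the requirement $t+w_i\in[t_1+t_{a,2},t_1+t_{b,2}]$ is equivalent to $s\in I_i := [w_i - t_{b,2},\, w_i - t_{a,2}]$. So the whole proof reduces to showing that
\[
[t_{a,1},t_{b,1}] \subseteq \bigcup_{i=1}^{p_f} I_i .
\]
Each $I_i$ has length $L$. I will establish the covering in three steps.

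\emph{Left endpoint.} Using $w_1 = w_0+\delta_1 L$, $w_0=t_{a,1}+t_{a,2}$ and $\delta_1\le 1$, I get
\[
w_1 - t_{b,2} = t_{a,1} + (\delta_1-1)L \le t_{a,1},
\]
so $I_1$ starts at or before $t_{a,1}$.

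\emph{No gaps.} For consecutive intervals, $\delta_{i+1}\le1$ gives
\[
w_{i+1}-t_{b,2} = w_i - t_{a,2} + (\delta_{i+1}-1)L \le w_i - t_{a,2}.
\]
Thus $I_{i+1}$ begins no later than $I_i$ ends. Since each $I_i$ is nonempty ($L>0$), the union is a single interval.

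\emph{Right endpoint.} The right end of $I_{p_f}$ is
\[
w_{p_f}-t_{a,2} = t_{a,1} + L\sum_{i=1}^{p_f}\delta_i \;\ge\; t_{a,1} + L\, p_f\,\frac{t_{b,1}-t_{a,1}}{p_f L} = t_{b,1},
\]
using the lower bound on $\delta_i$ in Rule \ref{rul:5}.

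Hence the union is an interval containing $[t_{a,1},t_{b,1}]$. For any $t_1$ pick $i$ with $t_1-t\in I_i$; then $t_2=t+w_i$ satisfies $(x,t_2)\models\psi$ and lies in the required window, which gives $(x,t)\models\phi$.

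The main point needing care is the no-gap step: it is exactly where $\delta_i\le 1$ is used, whereas the lower bound on $\delta_i$ is what secures the right endpoint. The condition $p_f\ge\lceil (t_{b,1}-t_{a,1})/L\rceil$ only serves to make the $\delta_i$-range nonempty. Closedness of the intervals makes the boundary cases harmless.
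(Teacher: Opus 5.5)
Your proof is correct. The paper gives no proof of Proposition~\ref{pro:6} and simply imports it from \cite{Marchesini2025}, so your argument stands as a self-contained verification from the semantics \eqref{eq:STL_sematics}. The reduction to the covering $[t_{a,1},t_{b,1}]\subseteq\bigcup_i [w_i-t_{b,2},\,w_i-t_{a,2}]$ is right, and all three computations check out:
\begin{itemize}
\item the left endpoint follows from $\delta_1\le 1$;
\item the overlap of consecutive intervals follows from $\delta_{i+1}\le 1$;
\item the right endpoint follows from the lower bound on $\delta_i$.
\end{itemize}

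One small precision point concerns your claim that the union is a single interval. The facts that $I_{i+1}$ begins no later than $I_i$ ends and that each $I_i$ is nonempty do not by themselves imply this. You also need $w_{i+1}\ge w_i$, which holds because the lower bound on $\delta_i$ is nonnegative.

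Alternatively, you can drop the single-interval claim altogether. Given $s\in[t_{a,1},t_{b,1}]$, take the smallest $i$ with $w_i-t_{a,2}\ge s$; such an $i$ exists by your right-endpoint bound. If $i=1$, your left-endpoint bound gives $s\in I_1$. If $i>1$, the overlap inequality gives $w_i-t_{b,2}\le w_{i-1}-t_{a,2}<s$, so $s\in I_i$.
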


\begin{proposition}
  \label{pro:1}
  For a given STL specification $\phi$ and the transformed specification $\hat \phi$ obtained via Rule \ref{rul:1}-\ref{rul:5}, if $(x,0)\models \hat \phi$ holds, then $(x,0)\models \phi$.
\end{proposition}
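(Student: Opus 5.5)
We argue by showing that each of Rules \ref{rul:1}--\ref{rul:5}, applied to a subformula, produces a formula that \emph{implies} the original one. We then lift this implication from subformulas to the whole formula by a monotonicity argument on the syntax tree.

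\textbf{Step 1: each rule is sound at every time $t$.} For each rule we show $(x,t)\models \text{new} \Rightarrow (x,t)\models \text{old}$ for all $t\geq 0$, not only $t=0$, since rewritten subformulas are evaluated at shifted times.

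For Rule \ref{rul:1}, suppose $(x,t)\models \mathrm{G}_{[0,t_b]}\psi_1\wedge\mathrm{F}_{[t_a,t_b]}\psi_2$. Pick $t_1\in[t+t_a,t+t_b]$ with $(x,t_1)\models\psi_2$. Since $[t,t_1]\subseteq[t,t+t_b]$, $\psi_1$ holds on $[t,t_1]$, which gives $\psi_1\mathrm{U}_{[t_a,t_b]}\psi_2$. This is the only genuinely one-directional rule besides Rule \ref{rul:5}.

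For Rule \ref{rul:2}, $h_{\mu_\lnot}=-h_\mu\geq0$ gives $h_\mu\leq 0$, not $h_\mu<0$. We handle this by noting the boundary subtlety and adopting the convention (or the mild strictness argument) that the CBF design later keeps $h$ strictly positive where needed. Alternatively we accept the standard identification of $\neg\mu$ with $-h_\mu\geq0$ used throughout the CBF-STL literature. For the conjunction, Lemma \ref{lem:2} gives $\widehat{\min}\leq\min$, so $h_{\mu_\wedge}\geq0$ implies $h_{\mu_k}\geq0$ for all $k$.

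Rule \ref{rul:3} is an equivalence, which follows directly from the semantics of $\mathrm{G}$ and $\wedge$.

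For Rule \ref{rul:4}, with $\mathrm{T}=\mathrm{G}$ the intervals satisfy $[t_{a,1},t_{b,1}]+[t_{a,2},t_{b,2}]=[t_{a,1}+t_{a,2},t_{b,1}+t_{b,2}]$ (Minkowski sum), so the two formulas are equivalent. With $\mathrm{T}=\mathrm{F}$, a witness $s$ in the sum interval decomposes as $s=s_1+s_2$ with $s_i$ in the respective intervals, which again gives equivalence.

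Rule \ref{rul:5} is sound by Proposition \ref{pro:6}, applied at arbitrary $t$.

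\textbf{Step 2: monotonicity of the syntax.} We prove by structural induction that if $\psi'\Rightarrow\psi$ pointwise in time (i.e., $(x,s)\models\psi'$ implies $(x,s)\models\psi$ for all $s$), then replacing an occurrence of $\psi$ by $\psi'$ inside any formula $\phi$ built from $\wedge$, $\mathrm{G}$, $\mathrm{F}$, $\mathrm{U}$ yields $\phi'\Rightarrow\phi$ pointwise. This holds because none of these operators contains negation above the replaced subformula: negation is only applied to predicates by the syntax (\ref{eq:STL_syntax}). Each case is immediate from the semantics, since $\wedge$, the universal quantifier in $\mathrm{G}$, and the existential and universal parts of $\mathrm{U}$ are all monotone.

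\textbf{Step 3: conclusion.} The transformation $\phi\to\hat\phi$ is a finite sequence of rule applications to subformulas. Chaining the implications from Steps 1 and 2 gives $(x,0)\models\hat\phi\Rightarrow(x,0)\models\phi$.

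The main obstacle is the positive-normal-form issue in Step 2, i.e.\ making sure no rewritten subformula sits under a negation, which the restricted syntax guarantees. A secondary obstacle is the boundary case $h_\mu=0$ in the $\neg\mu$ replacement. I would first check whether the paper's convention treats $\neg\mu$ via $-h_\mu\geq0$; if not, I would add an arbitrarily small margin to $h_{\mu_\lnot}$.
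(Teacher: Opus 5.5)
Your route is the paper's route. The paper's proof asserts that Rules~\ref{rul:1}, \ref{rul:3} and \ref{rul:4} follow from the semantics, that Lemma~\ref{lem:2} validates Rule~\ref{rul:2}, and that Proposition~\ref{pro:6} validates Rule~\ref{rul:5}, and then chains these. Your version makes explicit two things the paper leaves implicit. First, you check soundness at an arbitrary evaluation time $t$, which is needed because rewritten subformulas sit under temporal operators. Second, you supply the positive-context monotonicity lemma that lifts a subformula rewrite to the whole formula. Your interval arguments for Rules~\ref{rul:1} and~\ref{rul:4} are correct.

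The one point to tighten is the negation case, and you are right that it is a real defect. The paper's proof does not notice it, since Lemma~\ref{lem:2} only concerns $\widehat{\min}$. Under the paper's semantics, $(x,t)\models\neg\mu$ means $h_\mu(x(t))<0$, whereas $h_{\mu_\lnot}=-h_\mu\geq 0$ only gives $h_\mu\leq 0$. So $\phi=\neg\mu$ with $h_\mu(x(0))=0$ is a counterexample to the proposition as literally stated.

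Of your three remedies, the appeal to the CBF design keeping $h$ strictly positive does not work. The proposition is a statement about satisfaction alone, independent of any controller, and Theorem~\ref{the:1} assumes only $h_{\hat\phi}\geq 0$. You should commit to one of the other two options and state explicitly that you are proving the statement for the amended rule or semantics:
\begin{itemize}
\item a margin $h_{\mu_\lnot}:=-h_\mu-\epsilon_0$ with $\epsilon_0>0$, so that $h_{\mu_\lnot}\geq 0$ gives $h_\mu\leq-\epsilon_0<0$;
\item the non-strict reading of $\neg\mu$ as $-h_\mu\geq 0$.
\end{itemize}

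A minor point: the soft-min part of Rule~\ref{rul:2} is also one-directional, in the correct direction. So Rule~\ref{rul:1} is not the only such rule besides Rule~\ref{rul:5}.
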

\begin{proof}
  Proposition \ref{pro:1} establishes the soundness of the transformed specification $\hat \phi$. Specifically, the validity of Rules \ref{rul:1}, \ref{rul:3} and \ref{rul:4} follows directly from the STL semantics (\ref{eq:STL_sematics}). Furthermore, Lemma \ref{lem:2} validates Rule \ref{rul:2} while Proposition \ref{pro:6} guarantees Rule \ref{rul:5}. Consequently, the condition $(x,0)\models \hat \phi$ sufficiently implies $(x,0)\models \phi$.
\end{proof}

\begin{figure}[!htb]
\centering
\subfloat[]{\includegraphics[width=.404737\linewidth]{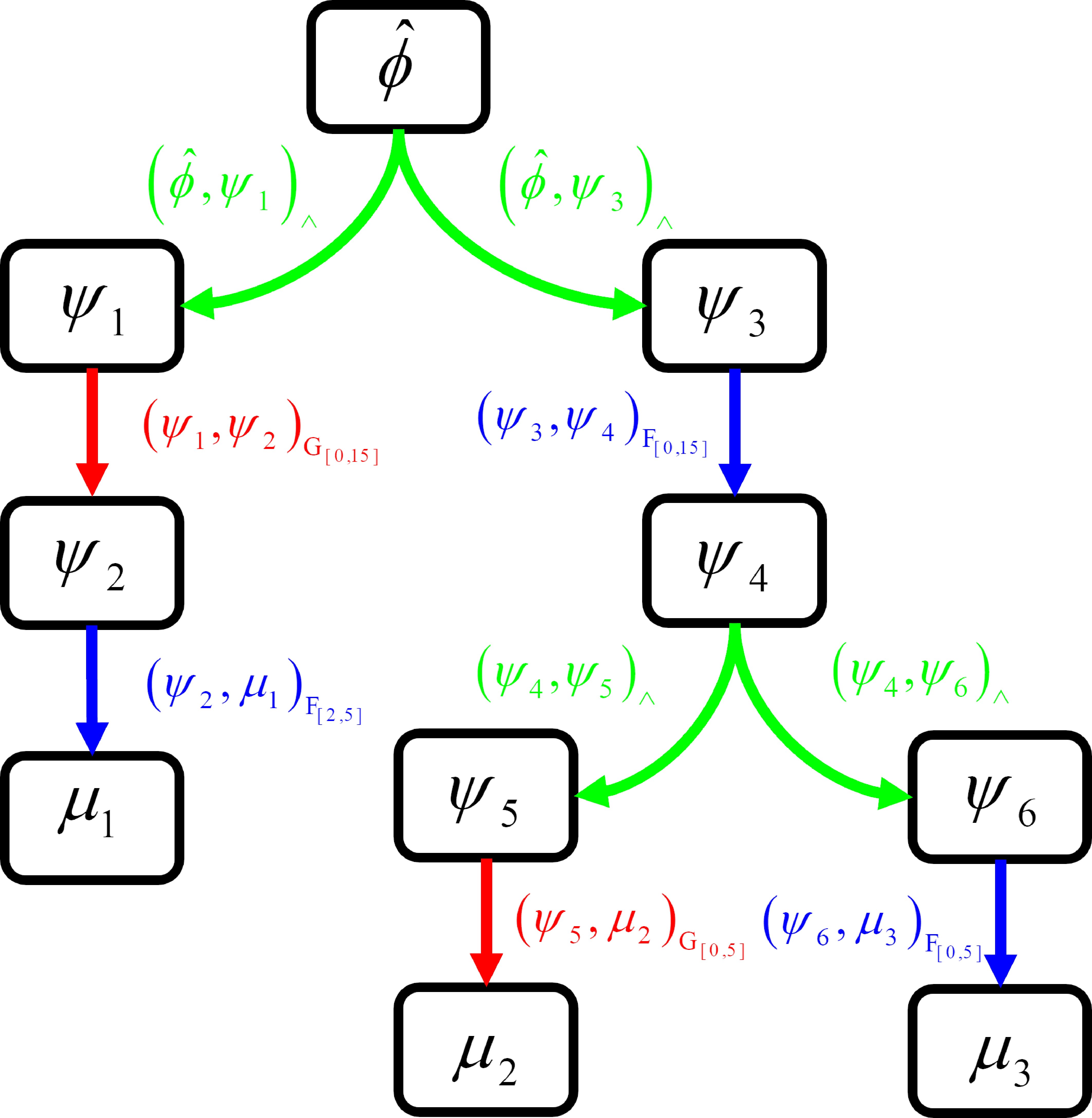}\label{fig:sTLTa}}\hspace{5pt}
\subfloat[]{\includegraphics[width=.555\linewidth]{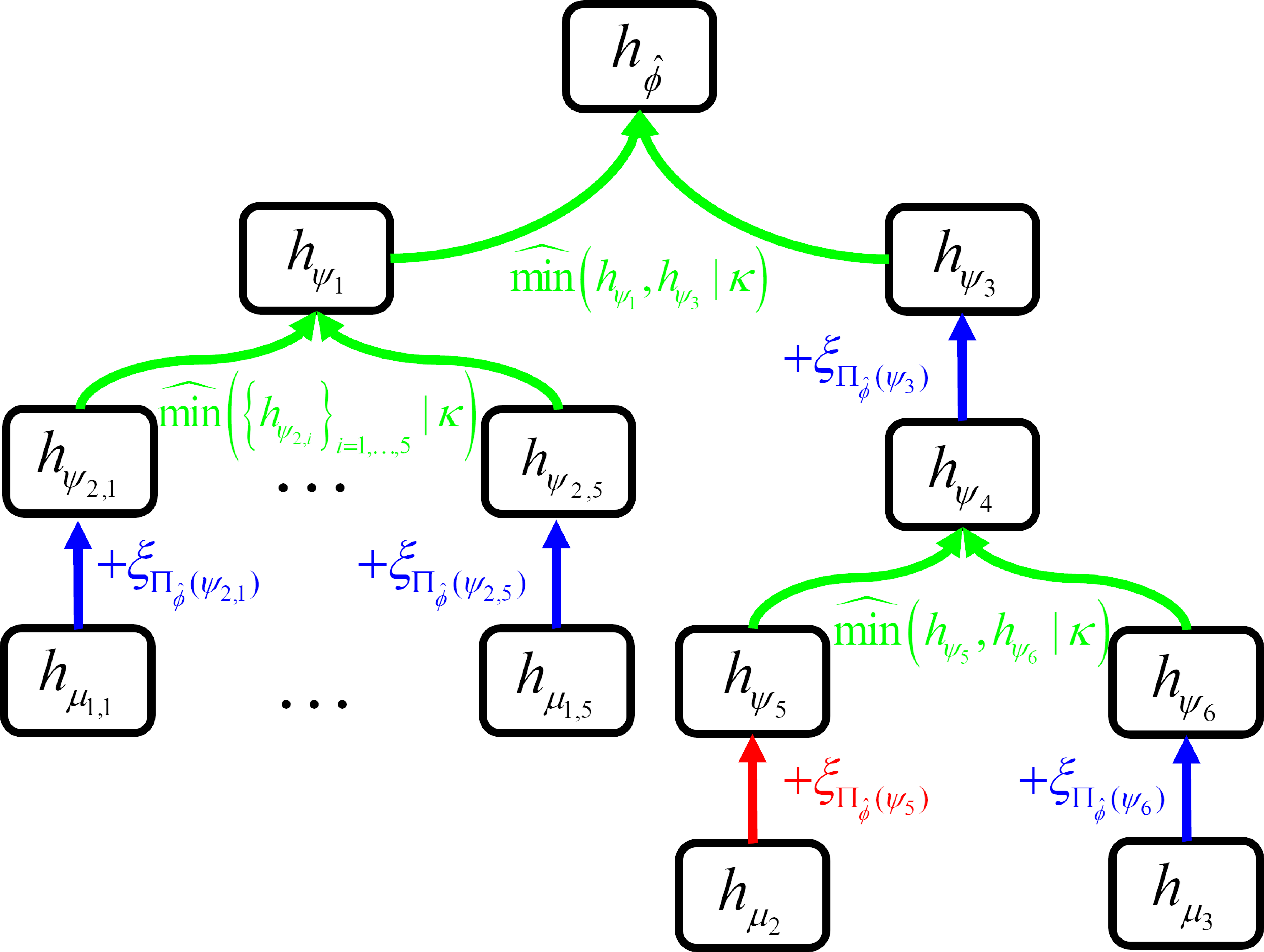}\label{fig:sTLTb}}\\
\caption{(a) Modified sTLT for $\phi = \mathrm{G}_{[0,15]} \mathrm{F}_{[2,5]}\mu_1 \wedge \mathrm{F}_{[0,15]} \left( \mathrm{G}_{[0,5]}\mu_2 \wedge \mathrm{F}_{[0,5]}\mu_3\right) $, where $\psi_1 = \mathrm{G}_{[0,15]} \psi_2$, $\psi_2 = \mathrm{F}_{[2,5]}\mu_1$, $\psi_3 = \mathrm{F}_{[0,15]} \psi_4$, $\psi_4 = \psi_5 \wedge \psi_6$, $\psi_5 = \mathrm{G}_{[0,5]}\mu_2$ and $\psi_6 = \mathrm{F}_{[0,5]}\mu_3$. (b) The backward CBF design along the edges of $\mathcal{T}_{\hat \phi}$, where $\hat \phi$ is derived by transforming the subformula $\psi_1=\mathrm{G}_{[0,15]} \mathrm{F}_{[2,5]}\mu_1$ contained in $\phi$ via Rule \ref{rul:5}.}
\label{fig:sTLT}
\end{figure}

Given an STL formula $\hat \phi$ transformed via Rules \ref{rul:1}-\ref{rul:5}, we construct the modified sTLT $\mathcal{T_{\hat \phi}}$ and define the set of formula nodes as $\mathcal{N}_{\hat \phi}: = \{ \hat \phi, \psi_1,\dots,\psi_M, \mu_1,\dots,\mu_N\}$.
Next, we design the CBF $h_{\hat \phi}$ for $\hat \phi$ along the tree $\mathcal{T}_{\hat \phi}$ in a recursive manner. The specific design procedure is as follows.

\textbf{\emph{Step 1.}} \textbf{Initialization}: Let $h_{\hat \phi},h_{\psi_1},\dots,h_{\psi_M},h_{\mu_1},\dots,$ $h_{\mu_N}$ denote the CBFs corresponding to $\hat \phi, \psi_1,\dots,\psi_M,$ $\mu_1,\dots,\mu_N$, respectively, where $h_{\mu_1},\dots,h_{\mu_N}$ are known predicate functions, while the remaining CBFs will be derived subsequently. 

Moreover, for each node in $\mathcal{N}_{\hat \phi}$, we define an active time $ \overline{t}_k$ and a terminal time $\underline{t}_k$ for $k \in \mathcal{I}_{\hat \phi}$, where $\mathcal{I}_{\hat \phi} \subseteq \mathbb{Z}^+$ is the index set of the nodes in $\mathcal{N}_{\hat \phi}$. We also introduce mappings 
\begin{align*}
  \Pi_{\hat \phi} (\cdot) : \mathcal{N}_{\hat \phi} \rightarrow \mathcal{I}_{\hat \phi}, \,\,
  \Xi _{\hat \phi} (\cdot) : \mathcal{N}_{\hat \phi} \rightarrow \mathcal{N}_{\hat \phi},
\end{align*}
where $\Pi_{\hat \phi} (\cdot)$ denotes the index mapping such that $k = \Pi_{\hat \phi} (\psi)$ is the index of $\psi$ and $\Xi_{\hat \phi} (\cdot)$ denotes the parent node mapping such that if $\left(\psi_1,\psi_2\right)_{\mathfrak{O} } \in \mathcal{E}_{\hat \phi}$, $\psi_1 = \Xi _{\hat \phi} (\psi_2)$. Furthermore, the active time for the root node $\hat \phi$ is initialized to $\overline{t}_{\Pi_{\hat \phi} (\hat \phi)} = 0$.

\textbf{\emph{Step 2.}} \textbf{Recursive Design}: Subsequently, starting from the root node $\hat \phi$, we recursively design the CBFs along the edges of the tree $\mathcal{T}_{\hat \phi}$. The following linear function is first introduced:
\begin{align}
  \label{eq:xi}
  \xi( \sigma_k| \vartheta_k) = - a_k \sigma_k + b_k, \,  k \in \mathcal{I}_{\hat \phi},
\end{align}
where $\vartheta_k:=[a_k, b_k]^T$ is a parameter to be designed and $\sigma_k$ is a sliding window variable to be determined. The design of the CBF varies depending on the type of edge. The specific design procedures are detailed as follows:

\noindent\textbf{\emph{Case I.}} $\boldsymbol{(\psi_i, \psi_j)_{\mathrm{T}_{[t_a,t_b]}}}$: In this case, it follows that $\psi_i = \mathrm{T}_{[t_a,t_b]} \psi_j$, the CBF is designed as follows:
\begin{subequations}
\label{eq:always_CBF}
\begin{align}
   h_{\psi_i}(x,t)&:= h_{\psi_j}(x,t) + \xi(\sigma_{\Pi_{\hat \phi}(\psi_i)}| \vartheta_{\Pi_{\hat \phi}(\psi_i)}), \\
   \sigma_{\Pi_{\hat \phi}(\psi_i)} &:= 
    \begin{cases}
    0 , & \text{if} \,  t < \overline{t}_{\Pi_{\hat \phi}(\psi_i)}  \\ 
    t - \overline{t}_{\Pi_{\hat \phi}(\psi_i)} , & \text{if} \,   \overline{t}_{\Pi_{\hat \phi}(\psi_i)} \leq t < \underline{t}_{\Pi_{\hat \phi}(\psi_i)}  \\
    \underline{t}_{\Pi_{\hat \phi}(\psi_i)} - \overline{t}_{\Pi_{\hat \phi}(\psi_i)} , & \text{if} \,  t \geq \underline{t}_{\Pi_{\hat \phi}(\psi_i)}
    \end{cases}\\
  \text{s.t. } \qquad & \xi(\sigma_{\Pi_{\hat \phi}(\psi_i)}(\underline{t}_{\Pi_{\hat \phi}(\psi_i)})| \vartheta_{\Pi_{\hat \phi}(\psi_i)}) = 0, \\
  \label{eq:time_design1}
  & \underline{t}_{\Pi_{\hat \phi}(\psi_i)} := \overline{t}_{\Pi_{\hat \phi}(\psi_i)} + t^\star, \, \overline{t}_{\Pi_{\hat \phi}(\psi_i)} :=  \underline{t}_{\Pi_{\hat \phi}\left(\Xi _{\hat \phi} (\psi_i)\right) },
\end{align}
\end{subequations}
where $t^\star \in [t_a,t_b]$ if $\psi_i = \mathrm{F}_{[t_a,t_b]} \psi_j$ and $t^\star = t_a$ if $\psi_i = \mathrm{G}_{[t_a,t_b]} \psi_j$. Herein, $\sigma_{\Pi_{\hat \phi}(\psi_i)}$ is referred to as a sliding window variable.

\noindent\textbf{\emph{Case II.}} $\left\{\boldsymbol{(\psi_{k_0}, \psi_{k_1})_{\wedge}},\dots,\boldsymbol{(\psi_{k_0}, \psi_{k_p})_{\wedge}}\right\} $: In this case, we have $\psi_{k_0} = \bigwedge_{l=1}^ p \psi_{k_l}$, the CBF is designed as:
\begin{subequations}
\label{eq:conjunction_CBF}
\begin{align}
   h_{\psi_{k_0}}(x,t)&:= \widehat {\min}\left(h_{\psi_{k_1}}(x,t),\dots,h_{\psi_{k_p}}(x,t)| \kappa\right) , \\
  \text{s.t. } \qquad 
  \label{eq:time_design3}
  & \underline{t}_{\Pi_{\hat \phi}(\psi_{k_0})} := \overline{t}_{\Pi_{\hat \phi}(\psi_{k_0})} , \, \overline{t}_{\Pi_{\hat \phi}(\psi_{k_0})} :=  \underline{t}_{\Pi_{\hat \phi}\left(\Xi _{\hat \phi} (\psi_{k_0})\right) } .
\end{align}
\end{subequations}

Moreover, we set $\overline t_{\Pi_{\hat \phi}\left(\mu_k\right)} =\underline t_{\Pi_{\hat \phi}\left(\mu_k\right)} = \underline{t}_{\Pi_{\hat \phi}\left(\Xi _{\hat \phi} (\mu_k)\right) }$ for any predicate $\mu_k \in \mathcal{N}_{\hat \phi}$.

\textbf{\emph{Step 3.}} \textbf{Assembly of $h_{\hat \phi}$}: From Step 2, starting from the root node $\hat \phi$, the active time $\overline{t}_{\Pi_{\hat \phi}(\psi_k)}$ and terminal time $\underline{t}_{\Pi_{\hat \phi}(\psi_k)}$ can be obtained for all nodes $\psi_k \in \mathcal{N}_{\hat \phi}$. Thus, the sliding window variable $\sigma_{\Pi_{\hat \phi}(\psi_k)}$ and the linear function $\xi_{\Pi_{\hat \phi}(\psi_k)} := \xi(\sigma_{\Pi_{\hat \phi}(\psi_k)}|\vartheta_{\Pi_{\hat \phi}(\psi_k)})$ can be designed sequentially for the relevant nodes.
Then, proceeding from the leaf (predicate) nodes $\mu_k \in \mathcal{N}_{\hat \phi}$ with known CBFs $h_{\mu_k}$, we assemble the CBFs for other nodes along the edges of $\mathcal{T}_{\hat \phi}$ based on (\ref{eq:always_CBF})-(\ref{eq:conjunction_CBF}), ultimately yielding the CBF $h_{\hat \phi}(x,t)$ for the root node $\hat \phi$. The backward design procedure along the edges in Step 3 is illustrated by an example in Fig. \ref{fig:sTLTb}.

Then, one of the main theorems of this paper is established.
\begin{theorem}
  \label{the:1} 
  Consider an STL specification $\phi$ defined in (\ref{eq:STL_syntax}) and the transformed specification $\hat \phi$. Let $h _{\hat \phi}$ be the CBF obtained via the recursive design (\ref{eq:always_CBF})-(\ref{eq:conjunction_CBF}). If $h _{\hat \phi}(x,t) \geq 0$ holds $\forall t \in [0,T)$ with a sufficiently large horizon $T$, then $(x,0)\models \phi$. 
\end{theorem}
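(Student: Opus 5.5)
By Proposition \ref{pro:1}, $(x,0)\models\hat\phi$ implies $(x,0)\models\phi$. It therefore suffices to show $(x,0)\models\hat\phi$. The plan is to prove, by induction over $\mathcal{T}_{\hat\phi}$ from the root downward, the invariant
\begin{center}
(P): if $h_{\psi}(x,t)\ge 0$ for all $t\in[0,T)$, then $(x,\overline{t}_{\Pi_{\hat\phi}(\psi)})\models\psi$,
\end{center}
for every node $\psi\in\mathcal{N}_{\hat\phi}$. Here $T$ is taken larger than every terminal time $\underline{t}_k$, and also larger than $\overline{t}_k+t_b$ for each Always node. Since $\overline{t}_{\Pi_{\hat\phi}(\hat\phi)}=0$, (P) at the root gives the claim.

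The induction is cleaner when run upward by structural induction on subformulas, so that is how I will organize it. Two facts from the construction are used throughout. First, each $h_\psi$ is bounded above by the leaf functions it is built from, shifted by the $\xi$ terms along the path. Second, every $\xi_k$ is a nonincreasing linear function of the clipped variable $\sigma_k$, with $\xi_k=0$ at $t\ge\underline{t}_k$. For this monotonicity I will take $a_k\ge 0$, and I would make that explicit as part of the design.

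\emph{Leaves.} For a predicate $\mu$, $\overline{t}=\underline{t}$. Evaluating $h_\mu\ge0$ at $t=\overline{t}$ gives $(x,\overline{t})\models\mu$. This covers the modified predicates of Rule \ref{rul:2} as well, via Lemma \ref{lem:2}.

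\emph{Conjunction.} If $h_{\psi_{k_0}}=\widehat{\min}(h_{\psi_{k_1}},\dots)\ge0$, then Lemma \ref{lem:2} gives $h_{\psi_{k_l}}\ge0$ for each $l$. The child active times equal $\overline{t}_{\Pi(\psi_{k_0})}$ by (\ref{eq:time_design3}), so the children's conclusions combine directly.

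\emph{Eventually.} Let $\psi_i=\mathrm{F}_{[t_a,t_b]}\psi_j$. For $t\ge\underline{t}_i=\overline{t}_i+t^\star$ we have $\xi_i=0$, so $h_{\psi_j}=h_{\psi_i}\ge0$ there. The child's active time is $\underline{t}_i$. The induction hypothesis, applied to the part of $h_{\psi_j}$ that matters after $\underline{t}_i$, gives $(x,\overline{t}_i+t^\star)\models\psi_j$ with $t^\star\in[t_a,t_b]$, which is exactly $(x,\overline{t}_i)\models\psi_i$.

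\emph{Always.} Let $\psi_i=\mathrm{G}_{[t_a,t_b]}\psi_j$. By Rules \ref{rul:3}--\ref{rul:5}, $\psi_j$ is either a predicate or an Eventually/conjunction node whose Eventually-type descendants have been split into point intervals. The case of a predicate child is direct: after $\underline{t}_i=\overline{t}_i+t_a$ we have $\xi_i\equiv0$, so $h_\mu(x(t))\ge0$ for all $t\in[\overline{t}_i+t_a,T)\supseteq[\overline{t}_i+t_a,\overline{t}_i+t_b]$.

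The main obstacle is this Always case when $\psi_j$ is itself temporal. The induction hypothesis only certifies $\psi_j$ at the single time $\underline{t}_i$, whereas $\mathrm{G}$ requires $\psi_j$ at every $t_1\in[\overline{t}_i+t_a,\overline{t}_i+t_b]$. I would try to close this gap in two ways. The first is to strengthen (P) to a time-shift statement: if the descendants' inequalities hold for all $t$ after a start time $s$, then $\psi$ holds at every $s'\ge s$ for which the windows still fit inside $T$. This uses that after their terminal times all $\xi$'s vanish, so the descendant CBFs reduce to $\widehat{\min}$-combinations of predicate functions, and nonnegativity for all later times certifies the pattern at every shift. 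The second is to argue that, after Rules \ref{rul:4}--\ref{rul:5}, the only nested pattern under $\mathrm{G}$ is a conjunction of point-interval $\mathrm{F}_{[w_i,w_i]}$. In that case the conditions reduce to predicate checks at fixed times, and Proposition \ref{pro:6} recovers the original $\mathrm{G}\mathrm{F}$. I expect to use the second route and fall back on the shift argument for general $\mathrm{G}$ of conjunctions. The "sufficiently large $T$" hypothesis absorbs the extra $t_b$ horizon this requires.
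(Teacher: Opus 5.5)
Your skeleton (reduce to $\hat\phi$ via Proposition~\ref{pro:1}, then induct over $\mathcal{T}_{\hat\phi}$) matches the paper. Your leaf, conjunction and Eventually cases are handled as the paper handles them, and you correctly isolate the Always case as the crux. But the route you say you will use there fails.

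Rule~\ref{rul:5} does not leave a $\mathrm{G}$ sitting over point-interval Eventually nodes. It removes the $\mathrm{G}$ entirely, rewriting $\mathrm{G}\mathrm{F}\psi$ as $\bigwedge_i\mathrm{F}_{[w_i,w_i]}\psi$. Proposition~\ref{pro:6} relates $\hat\phi$ to $\phi$ and is already used up inside Proposition~\ref{pro:1}, so it plays no role in the induction on $\mathcal{T}_{\hat\phi}$. The $\mathrm{G}$-nodes that survive Rules~\ref{rul:1}--\ref{rul:5} have as child either a predicate or a point-interval $\mathrm{F}_{[c,c]}\psi$. This is because Rule~\ref{rul:5} fires only when $t_{a,2}\neq t_{b,2}$, and a conjunction child is excluded by Rule~\ref{rul:3}, contrary to what you wrote. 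Here $\psi$ may be arbitrarily nested and contain non-point Eventually operators, e.g.\ $\mathrm{G}_{[0,10]}\mathrm{F}_{[2,2]}(\mathrm{G}_{[0,1]}\mu_1\wedge\mathrm{F}_{[0,3]}\mu_2)$. Nothing there reduces to predicate checks at fixed times. Because you reserve the shift argument for ``$\mathrm{G}$ of conjunctions'', which never occur, your plan as written leaves exactly this case uncovered.

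A smaller defect: (P) assumes $h_\psi\ge0$ on all of $[0,T)$, but a child only inherits $h_{\psi_j}=h_{\psi_i}\ge0$ on $[\underline t_i,T)$. Before that, $h_{\psi_j}$ may be negative, compensated by $\xi_i>0$. So (P) does not compose; the hypothesis must be nonnegativity on $[\overline t_{\Pi_{\hat\phi}(\psi)},T)$, which is the paper's (\ref{eq:17}).

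Your first route is the correct fix and should be the whole argument. Write $H(\psi)$ for the largest sum of right endpoints $t_b$ along a path from $\psi$ to a leaf. Then prove, for every node $\psi$: if $h_\psi(x,t)\ge0$ for all $t\in[\overline t_{\Pi_{\hat\phi}(\psi)},T)$, then $(x,s)\models\psi$ for \emph{every} $s\ge\overline t_{\Pi_{\hat\phi}(\psi)}$ with $s+H(\psi)<T$.

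The hypothesis passes to children because $\xi_i\equiv0$ on $[\underline t_i,\infty)$ and $\widehat{\min}\le\min$. Each case is then one line. For $\mathrm{F}$, take $t_1=s+t^\star\ge\overline t_i+t^\star=\overline t_j$. For $\mathrm{G}$, every $t_1\in[s+t_a,s+t_b]$ satisfies $t_1\ge\overline t_i+t_a=\overline t_j$. Leaves and conjunctions are immediate.

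At the root this needs $T>H(\hat\phi)$, the largest release time of Remark~\ref{re:1}. Your explicit choice of $T$ is too small: $\mathrm{G}_{[0,10]}\mathrm{F}_{[2,2]}\mathrm{G}_{[0,3]}\mu$ requires $\mu$ on $[2,15]$, while your rule only asks for $T>10$.

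The paper closes the Always case differently. It replaces $\mathrm{G}_{[t_a,t_b]}$ with $\mathrm{G}_{[t_a+\delta,t_b]}$ while keeping the same $b_k$. It notes that the resulting $\xi$'s are pointwise no smaller, since descendant windows move right by $\delta$ and each $\xi$ is nonincreasing, so $h_{\hat\phi'}\ge h_{\hat\phi}\ge0$. It then invokes the induction in the shifted tree to get $\psi_k$ at $\overline t+t_a+\delta$.

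Your persistence invariant reaches the same conclusion inside a single tree and needs no comparison of CBFs. It does not even need the sign condition $a_k\ge0$, only that $\xi_k$ vanishes after $\underline t_k$. It also makes explicit what the paper's argument tacitly requires: an inductive hypothesis ranging over all time-shifted copies of a subformula.
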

\begin{proof}
By the recursive definitions in (\ref{eq:time_design1})-(\ref{eq:time_design3}), the active time of a child node equals the terminal time of its parent, i.e., $\overline{t}_{\Pi_{\hat \phi}(\psi_{q})} = \underline{t}_{\Pi_{\hat \phi}\left(\Xi _{\hat \phi} (\psi_{q})\right) }$ for any $\psi_q \in \mathcal{N}_{\hat \phi}$. Furthermore, since the piecewise variable $\sigma_{\Pi_{\hat \phi}(\psi_q)}$ is monotonically increasing with respect to time, the linear function $\xi_{\Pi_{\hat \phi}(\psi_q)}$ is monotonically decreasing. Given that $\xi(\sigma_{\Pi_{\hat \phi}(\psi_q)}(\underline{t}_{\Pi_{\hat \phi}(\psi_q)})| \vartheta_{\Pi_{\hat \phi}(\psi_q)}) = 0$, the condition $h_{\hat \phi}(x,t) \geq 0$ over $[0,T)$ guarantees the non-negativity of all intermediate CBFs. Specifically, for any node $\psi_q \in \mathcal{N}_{\hat \phi}$, we have:
\begin{align}
  \label{eq:17}
  h_{\psi_q}(x,t) \geq 0, \quad \forall t \geq \overline{t}_{\Pi_{\hat \phi}(\psi_{q})} = \underline{t}_{\Pi_{\hat \phi}\left(\Xi _{\hat \phi} (\psi_{q})\right) }. 
\end{align}

We proceed by mathematical induction on the structure of the modified sTLT to establish that $(x,\overline{t}_{\Pi_{\hat \phi}(\psi_{q})}) \models \psi_q$ for all $\psi_q \in \mathcal{N}_{\hat \phi}$. 

\noindent \textbf{Base Case:} For any predicate node $\mu_k \in \mathcal{N}_{\hat \phi}$, (\ref{eq:17}) implies $h_{\mu_k}(x, t) \geq 0$ for $t \in [\overline{t}_{\Pi_{\hat \phi}(\mu_{k})},T)$. By definition, this directly yields $(x,\overline{t}_{\Pi_{\hat \phi}(\mu_{k})}) \models \mu_k$.

\noindent \textbf{Inductive Step:} Assume the inductive hypothesis $(x,\underline{t}_{\Pi_{\hat \phi}(\psi_k)}) \models \psi_k$ holds for a child node $\psi_k$. We evaluate its parent node $\psi_l$:
\begin{enumerate}
    \item \textit{For $\psi_l = \mathrm{F}_{[t_a,t_b]}\psi_k$:} The recursive construction yields $\overline{t}_{\Pi_{\hat \phi}(\psi_{k})} = \overline{t}_{\Pi_{\hat \phi}(\psi_l)} + t^\star$ for $t^\star \in [t_a, t_b]$. The inductive hypothesis guarantees satisfaction at this specific instant $t^\star$, implying $(x,\overline{t}_{\Pi_{\hat \phi}(\psi_l)}) \models \mathrm{F}_{[t_a,t_b]} \psi_k$ according to the semantics in (\ref{eq:STL_sematics}).
    
    \item \textit{For $\psi_l = \mathrm{G}_{[t_a,t_b]}\psi_k$:} Let $\hat \phi'$ be the formula derived from $\hat \phi$ by explicitly replacing its subformula $\psi_l = \mathrm{G}_{[t_a,t_b]} \psi_k$ with $\psi_l = \mathrm{G}_{[t_a+\delta,t_b]} \psi_k$, where $\delta \in [0,t_b-t_a]$ represents a temporal shift. It is evident that the corresponding modified sTLTs $\mathcal{T}_{\hat \phi}$ and $\mathcal{T}_{\hat \phi'}$ share an identical structure. Similarly, we recursively design the CBF $h_{\hat \phi'}(x,t)$ and deliberately enforce identical parameters $b_{\Pi_{\hat \phi}(\psi_q)} = b_{\Pi_{\hat \phi'}(\psi_q)}$ for all corresponding nodes $\psi_q$ across both trees. Due to this identical tree structure and the shared parameters, the monotonic properties dictate that $h_{\hat \phi'}(x,t) \geq h_{\hat \phi}(x,t) \geq 0$. The inductive hypothesis guarantees $(x,\overline{t}_{\Pi_{\hat \phi}(\psi_l)} + t_a + \delta) \models \psi_k$. Since this satisfaction holds for any arbitrary shift $\delta \in [0,t_b-t_a]$, it directly follows from the STL semantics that $(x,\overline{t}_{\Pi_{\hat \phi}(\psi_l)}) \models \mathrm{G}_{[t_a,t_b]} \psi_k = \psi_l$.
    
    \item \textit{For Conjunction $\wedge$:} For $\psi = \bigwedge_{k=1}^p \psi_k$, the active times synchronize $\overline{t}_{\Pi_{\hat \phi}(\psi)} = \overline{t}_{\Pi_{\hat \phi}(\psi_k)}$. The under-approximation of the minimum operator ensures all subformulas are satisfied simultaneously, yielding $(x, \overline{t}_{\Pi_{\hat \phi}(\psi)}) \models \psi$.
\end{enumerate}

By induction, $(x,\overline{t}_{\Pi_{\hat \phi}(\hat \phi)} = 0) \models \hat \phi$ holds. Combining this result with Proposition \ref{pro:1} concludes the proof, demonstrating that $(x, 0) \models \phi$.
\end{proof}

\begin{remark} \label{re:1}
  Since the complete path from the root node $ \hat \phi$ to a given predicate node $\mu_k$ is unique, the release time $T_{\hat \phi}(\mu_k)$ is given by the sum of the interval endpoints $t_b$ of all temporal operators $\mathrm{T}_{[t_a,t_b]}$ on this path. Inspired by~\cite{Lindemann2019a}, to prevent expired STL specifications from affecting the execution of subsequent tasks, we set $h_{\mu_k} = +\infty$ once $t \geq T_{\hat \phi}(\mu_k)$, thereby rendering it ineffective on the overall CBF $h_{\hat{\phi}}$. Additionally, $T = \max_{\mu_k \in \mathcal{N}_{\hat \phi}}(T_{\hat \phi}(\mu_k))$ can be established in Theorem \ref{the:1} since any subsequent system behavior beyond this time becomes irrelevant to the satisfaction of the STL specification.
\end{remark}

\subsection{Disturbance Rejection Control using Reconstructed CBFs}
Building upon the CBF $h_{\hat \phi}$ recursively designed for nested STL formulas, a CBF-QP can render the corresponding safe set $\mathcal{C}(t):=\left\{x \in \mathbb{R}^n \mid  h_{\hat \phi}(x,t)\geq 0\right\} $ forward invariant to satisfy the STL specifications. However, the presence of unknown disturbances in the system renders the standard CBF constraint \eqref{eq:uzcbf} inapplicable. To address the unknown disturbances, a disturbance rejection controller is synthesized based on the reconstructed CBFs.

Note that due to the piecewise nature of function \eqref{eq:xi} in the recursive design and the presence of the timeout pruning mechanism in Remark \ref{re:1}, $h_{\hat \phi}$ is piecewise continuously differentiable. Let us define the switching time sequence of the CBF as $\mathcal{S}_{\hat \phi}=\{s_0=0, s_1, \dots, s_q\}$, where $q$ denotes the total number of switches. 
This sequence encompasses all $\overline{t}_k$, $\underline{t}_k$ and $T_{\hat \phi}(\mu_k)$, ensuring that $h_{\hat{\phi}}$ is continuously differentiable over any $[s_{j-1}, s_j)$ for $j = 1,\dots,q$.
Besides, $$\tau(t) := \max \left\{ T_{\hat \phi}(\mu_k) \mid \mu_k \in \mathcal{N}_{\hat \phi}, \, T_{\hat \phi}(\mu_k) \leq t \right\}$$ is defined as the last release time.

To proceed with the subsequent content, the following assumption and lemma are required.
\begin{assumption}
  \label{ass:3} 
  At the initial time $t=0$, the system is in the interior of the safe set $ \mathcal{C}(0)$, i.e., $h_{\hat \phi}(x,0) >0$.
\end{assumption}
\begin{lemma}[\cite{Li2022}]
  \label{lem:3} 
    The following inequality holds for any $\epsilon>0$ and any $\varpi \in \mathbb{R}$:
    \begin{equation}
      |\varpi| \leq \frac{\varpi^2}{\sqrt{\varpi^2 + \epsilon^2}} + \epsilon.
    \end{equation}
\end{lemma}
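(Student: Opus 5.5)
The plan is to treat the two sides separately and reduce everything to elementary facts about $|\varpi|$ and $\sqrt{\varpi^2+\epsilon^2}$. Fix $\epsilon>0$ and $\varpi\in\mathbb{R}$. Set $r:=\sqrt{\varpi^2+\epsilon^2}$, which is strictly positive because $\epsilon>0$. With this notation the claimed inequality reads
\begin{equation*}
|\varpi| \leq \frac{\varpi^2}{r} + \epsilon .
\end{equation*}
Multiplying through by $r>0$ clears the denominator, so the statement is equivalent to $r|\varpi| - \varpi^2 \leq \epsilon r$.

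The key step is to bound the left-hand side $|\varpi|\,(r-|\varpi|)$. I will use the identity
\begin{equation*}
r-|\varpi| = \frac{r^2-\varpi^2}{r+|\varpi|} = \frac{\epsilon^2}{r+|\varpi|}.
\end{equation*}
This gives $|\varpi|\,(r-|\varpi|) = \epsilon^2 |\varpi|/(r+|\varpi|)$. Since $|\varpi|/(r+|\varpi|)\leq 1$, the left side is at most $\epsilon^2$. Finally, $\epsilon^2\leq \epsilon r$ holds because $r\geq\epsilon$. Chaining these bounds yields $r|\varpi|-\varpi^2\leq\epsilon r$, and dividing by $r$ recovers the lemma.

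An alternative route is also available. One writes $|\varpi| - \varpi^2/r = |\varpi|(r-|\varpi|)/r$ and then bounds $r-|\varpi|\leq\epsilon$ directly, using subadditivity of the square root, $\sqrt{\varpi^2+\epsilon^2}\leq|\varpi|+\epsilon$. Combined with $|\varpi|/r\leq 1$, this gives the result in one line.

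There is no real obstacle. The only point requiring care is to note that $r>0$, so the division and multiplication are legitimate. This holds even when $\varpi=0$, because $\epsilon>0$.
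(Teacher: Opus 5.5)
Your proof is correct. The rationalization $r-|\varpi| = \epsilon^2/(r+|\varpi|)$ and the subadditivity bound $\sqrt{\varpi^2+\epsilon^2}\le|\varpi|+\epsilon$ each give $|\varpi| - \varpi^2/r \le \epsilon$, and you correctly noted that $r>0$ even when $\varpi=0$.

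The paper gives no proof of this lemma; it imports the result from \cite{Li2022}. Your second route is the usual one-line argument for it, so there is nothing in the paper to compare your approach against.
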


\begin{remark}
Compared to existing literature, the assumptions introduced in this paper are relatively milder. Specifically, Assumption \ref{ass:1} dictates that the proposed method requires merely the boundedness of disturbances while demanding no exact prior knowledge, directly contrasting with the reliance on known disturbance upper bounds seen in~\cite{Wang2023}. Furthermore, regarding the initial safety condition outlined in Assumption \ref{ass:3}, existing studies~\cite{Zhou2025,Sun2024} frequently postulate $h_{\hat \phi}(x,0) > \beta V_d(0) \geq 0$, where $\beta$ is a positive constant and $V_d(0) \geq 0$ acts as an initial Lyapunov function value related to the disturbance estimation error. In practice, $V_d(0)$ is a positive constant that cannot be predetermined, potentially rendering the condition $h_{\hat \phi}(x,0) > \beta V_d(0)$ overly stringent. In contrast, Assumption \ref{ass:3} allows the initial state $x(0)$ to be positioned much closer to the safety boundary, thereby facilitating practical initial configurations.
\end{remark}

To begin the synthesis, the following reference model is introduced:
\begin{equation}
\dot{\hat x } = f(x) + g(x)u + \lambda\left( x - \hat x\right) ,
\end{equation}
where $\hat x \in \mathbb{R}^n$ is the estimate of $x$ and $\lambda > \frac{1}{2}$ is a constant. 
Next, we introduce the following reconstructed CBF:
\begin{align}
  \label{eq:rCBF}
  & \hat h_{\hat \phi} (\hat \upsilon ,\eta) = h_{\hat \phi}(\hat \upsilon) - \eta,
\end{align}
where $\hat \upsilon: =[\hat x ^T, t]^T$, $\hat h_{\hat \phi}$ is the reconstructed CBF, $h_{\hat \phi}(\hat \upsilon)$ is obtained by substituting $x$ with $\hat{x}$ in $h_{\hat \phi}(x,t)$, while $\eta$ acts as an adaptive parameter. 
Let
\begin{equation*} 
  e := h_{\hat \phi}(\upsilon) - \hat h_{\hat \phi} (\hat \upsilon ,\eta)
\end{equation*}
denote the reconstruction error, where $\upsilon: =[x ^T, t]^T$. 
Furthermore, by introducing an additional adaptive parameter $\hat r$, the update laws for both parameters are designed as follows:
\begin{align} \label{eq:eta_dot}
   & \dot \eta = \begin{aligned}[t] 
      & -c \frac{e(\rho - e)}{\rho}\varepsilon - \frac{\varrho e}{\rho}(\rho_0 - \rho_\infty) \mathrm{e}^{-\varrho (t - \tau(t))} \\
      & - \frac{\rho \varepsilon}{4e(\rho - e)} - \frac{\hat r^2 \left(\left\lVert \frac{\text{d} \hat h_{\hat \phi}(\hat \upsilon)}{\text{d} \hat \upsilon}\right\rVert + \left\lVert  \dot{\bar z} \right\rVert \right) \chi}{\sqrt{\chi^2\hat r^2 + \epsilon^2}},
    \end{aligned} \notag \\
  & \dot {\hat r} = \frac{ \gamma|\varepsilon| \rho}{2e(\rho-e)}\left(\left\lVert \frac{\text{d} \hat h_{\hat \phi}(\hat \upsilon)}{\text{d} \hat \upsilon}\right\rVert + \left\lVert  \dot{\bar z} \right\rVert \right) - \varsigma \hat r, 
\end{align}
where 
\begin{align*}
  & \varepsilon = \frac{1}{2} \ln\left(\frac{e}{\rho-e}\right) ,\, \chi = \frac{\varepsilon \rho}{2e (\rho-e )}\left(\left\lVert \frac{\text{d} \hat h_{\hat \phi}(\hat \upsilon)}{\text{d} \hat \upsilon}\right\rVert + \left\lVert  \dot{\bar z} \right\rVert \right) \\
  &\rho(t) = \left(\rho_0 - \rho_{\infty}\right) \mathrm{e}^{-\varrho (t - \tau(t))} + \rho_{\infty}, 
\end{align*}
with $c$, $\varrho$, $\gamma$, $\varsigma$, $\epsilon$ and $\rho_0 > \rho_\infty$ being positive constants. Besides, $\dot {\bar z}: = [\dot z ^T, 1]^T$ and $\dot{z}$ serves as a differentiator to generate an estimate of $\dot{x}$ without explicitly incorporating the control input $u$.
Further, at the release time $T_{\hat \phi}(\mu_k)$ for $\mu_k \in \mathcal{N}_{\hat \phi}$, the parameters are required to satisfy $0 < e(T_{\hat \phi}(\mu_k)) < \rho_{\infty}$ and $\hat h_{\hat \phi} (\hat \upsilon(T_{\hat \phi}(\mu_k)) ,\eta(T_{\hat \phi}(\mu_k))) \geq 0$. 
\begin{proposition} \label{pro:7}
  Consider the reconstructed CBF \eqref{eq:rCBF} and \eqref{eq:eta_dot}, under Assumption \ref{ass:1}-\ref{ass:3}, it can be concluded that:
  \begin{itemize}
    \item [i)] The estimation error $\tilde x:=x - \hat x$ remains globally uniformly bounded with $\lim_{t \to \infty} \sup \tilde x \leq \frac{D}{\sqrt{2\lambda - 1}} $;
    \item [ii)] Assuming the estimation error $\dot {\tilde{z}}:= \dot{x} - \dot{z}$ is bounded, the reconstruction error $e$ satisfies $0< e(t) < \rho(t)$ for any $t \in [s_{j-1},s_j)$, $j=1,\dots,q$ and adaptive parameters $\eta$, $\hat r$ are bounded.
\end{itemize}
\end{proposition}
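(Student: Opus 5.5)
For part i), I will work directly with the error dynamics. Subtracting the reference model from \eqref{eq:sys} gives $\dot{\tilde x} = d - \lambda \tilde x$, independent of $u$. Take $V_1 = \frac{1}{2}\tilde x^T\tilde x$. Young's inequality $\tilde x^T d \le \frac{1}{2}\|\tilde x\|^2 + \frac{1}{2}\|d\|^2$ together with Assumption \ref{ass:1} gives
\begin{equation*}
\dot V_1 \le -\left(\lambda - \tfrac{1}{2}\right)\|\tilde x\|^2 + \tfrac{1}{2}D^2 = -(2\lambda-1)V_1 + \tfrac{1}{2}D^2 .
\end{equation*}
Since $\lambda > \frac12$, the comparison lemma yields $V_1(t) \le \mathrm{e}^{-(2\lambda-1)t}V_1(0) + \frac{D^2}{2(2\lambda-1)}$. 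This gives global uniform boundedness and $\limsup_{t\to\infty}\|\tilde x\| \le D/\sqrt{2\lambda-1}$.

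For part ii), I will argue on a fixed interval $[s_{j-1},s_j)$, on which $h_{\hat\phi}$ is $C^1$ and $\tau(t)$ is constant. On such an interval $\rho$ is smooth, with $\dot\rho = -\varrho(\rho_0-\rho_\infty)\mathrm{e}^{-\varrho(t-\tau)}$. The initial condition $0<e<\rho$ holds at the start of the interval: at $t=0$ by Assumption \ref{ass:3} and the choice of $\eta(0)$, and at release times by the imposed resets $0<e<\rho_\infty\le\rho$.

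The idea is a barrier Lyapunov argument on the transformed error $\varepsilon = \frac12\ln\frac{e}{\rho-e}$. This quantity is finite exactly when $0<e<\rho$. I will therefore show that $\varepsilon$ stays bounded on the maximal existence interval, and then invoke the standard extension argument to get $0<e<\rho$ on the whole of $[s_{j-1},s_j)$. Compute
\begin{equation*}
\dot\varepsilon = \frac{\rho}{2e(\rho-e)}\Big(\dot e - \frac{e}{\rho}\dot\rho\Big), \qquad \dot e = \frac{\mathrm{d}h_{\hat\phi}(\upsilon)}{\mathrm{d}t} - \frac{\mathrm{d}h_{\hat\phi}(\hat\upsilon)}{\mathrm{d}t} + \dot\eta .
\end{equation*}
The key step is to bound the mismatch $\frac{\mathrm d h(\upsilon)}{\mathrm d\upsilon}\dot{\bar x} - \frac{\mathrm d h(\hat\upsilon)}{\mathrm d\hat\upsilon}\dot{\bar{\hat x}}$. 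I will write it as $\frac{\mathrm d h(\hat\upsilon)}{\mathrm d\hat\upsilon}(\dot{\bar z}-\dot{\bar{\hat x}}) + \big(\frac{\mathrm dh(\upsilon)}{\mathrm d\upsilon}-\frac{\mathrm dh(\hat\upsilon)}{\mathrm d\hat\upsilon}\big)\dot{\bar z} + \frac{\mathrm d h(\upsilon)}{\mathrm d\upsilon}\dot{\tilde{\bar z}}$. The gradient difference is controlled by Assumption \ref{ass:2}, giving a Lipschitz bound times $\|\tilde x\|$, which is bounded by part i). The $\dot{\tilde z}$ term is bounded by hypothesis. Altogether the mismatch should be bounded by $r\big(\|\frac{\mathrm d\hat h}{\mathrm d\hat\upsilon}\|+\|\dot{\bar z}\|\big) + r_0$ for unknown constants $r, r_0$.

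Next I will take $V_2 = \frac12\varepsilon^2 + \frac{1}{2\gamma}\tilde r^2$ with $\tilde r = r^2 - \hat r$, or a similar squared-gain parameterization matching the $\hat r^2$ appearing in $\dot\eta$. Substituting \eqref{eq:eta_dot}, the terms are designed to cancel in the following way. The $\dot\rho$ term is cancelled by the $\varrho(\rho_0-\rho_\infty)$ term in $\dot\eta$. The term $-c\frac{e(\rho-e)}{\rho}\varepsilon$ produces $-\frac c2\varepsilon^2$. The cross term $|\varepsilon|\frac{\rho}{2e(\rho-e)}r(\cdots)$ is split into the adaptive part handled by $\dot{\hat r}$ and the robust part $\frac{\hat r^2\chi^2}{\sqrt{\chi^2\hat r^2+\epsilon^2}}$, using Lemma \ref{lem:3} to leave a residual of order $\epsilon$. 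The term $-\frac{\rho\varepsilon}{4e(\rho-e)}$ combined with Young's inequality absorbs the constant $r_0$. The leakage $-\varsigma\hat r$ gives $-\frac{\varsigma}{2\gamma}\tilde r^2 + \text{const}$. The target is $\dot V_2 \le -\kappa_0 V_2 + C$, which bounds $\varepsilon$ and $\hat r$ on the interval. Boundedness of $\eta$ then follows from $\eta = h_{\hat\phi}(\hat\upsilon) - h_{\hat\phi}(\upsilon) + e$, with $e\in(0,\rho)$ and $h$ evaluated along bounded trajectories.

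I expect the main obstacle to be this term-by-term matching: the factor $\frac{\rho}{2e(\rho-e)}$ multiplies everything in $\dot\varepsilon$, and the $\hat r^2$ structure and signs must be aligned exactly with the chosen $V_2$. I will first try the parameterization $\tilde r = r^2 - \hat r$ and check each cancellation. Only if that fails will I adjust the Lyapunov weights.
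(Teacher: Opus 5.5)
Your proposal follows the paper's proof essentially step for step. Part i) is the same Young's-inequality argument on $V_1=\frac12\tilde x^T\tilde x$. Part ii) uses the same barrier Lyapunov function $V_2=\frac12\varepsilon^2+\frac{1}{2\gamma}\tilde r^2$, the same cancellations from \eqref{eq:eta_dot}, Lemma~\ref{lem:3}, Young's inequality and leakage, and the same interval-by-interval extension argument. Your three-term split of the gradient mismatch is only a tidier form of the paper's bound-then-reverse-triangle step. You also skip non-release switching times: there $e$ and $\rho$ are continuous, so $0<e<\rho$ carries over from the bounded $\varepsilon$ on the previous interval.

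The one correction is that the paper uses $\tilde r=r-\hat r$, so $\hat r$ estimates $r$ itself and $|\chi|r=|\chi|\tilde r+|\chi|\hat r$ holds exactly. The $\hat r^2$ in $\dot\eta$ comes from applying Lemma~\ref{lem:3} to $\varpi=\chi\hat r$, not from a squared gain. With $\tilde r=r^2-\hat r$ you would only get $|\chi|r\le|\chi|\tilde r+|\chi|\hat r$, and only after enlarging $r$ so that $r\ge1$.
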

\begin{proof}
  i) By constructing the Lyapunov function
  \begin{equation}
    V_1 = \frac{1}{2}\tilde x^T \tilde x,
  \end{equation}
  and applying Young's inequality, it can be deduced that 
  \begin{equation}
    \dot V_1 \leq -(2\lambda-1) V_1 + \frac{1}{2}D^2.
  \end{equation}
  Thus, it can be obtained that $\tilde x$ remains bounded with $\lim_{t \to \infty} \sup \tilde x \leq \frac{D}{\sqrt{2\lambda - 1}} $. 
  
  ii) 
  The proof proceeds by contradiction. Suppose that there exists a finite time $\bar{t} $ at which $e$ exceeds the performance boundary $(0, \rho(t))$ for the first time within the interval $[0, s_1)$. Thus, it can be obtained that $\lim_{t \to \bar t} \varepsilon = \infty  $ and for $t \in [0,\bar t)$, $e \in (0,\rho(t))$. The following Lyapunov function is selected as 
  \begin{equation}\label{eq:V_2}
    V_2 = \frac{1}{2}\varepsilon^2 + \frac{1}{2 \gamma} \tilde r^2,
  \end{equation}
  where $\tilde r = r - \hat r$ and $r$ is a positive constant satisfying certain conditions to be introduced later. Taking the derivative of \eqref{eq:V_2} over the interval $[0, \bar t)$ yields
  \begin{align}
    \dot V_2  = & \varepsilon \left(\frac{\rho}{2 e\left(\rho - e\right) }\dot \eta - \frac{1}{2 \left(\rho - e\right) } \dot \rho\right) + \frac{\varepsilon \rho}{2 e\left(\rho - e\right)} \left[\frac{\text{d} h_{\hat \phi}(\upsilon)}{\text{d} \upsilon} \right. \notag \\
    & \times  \left. \left(\dot \upsilon - \dot {\hat \upsilon }\right) + \left(\frac{\text{d} h_{\hat \phi}(\upsilon)}{\text{d} \upsilon}  - \frac{\text{d}\hat h_{\hat \phi}(\hat \upsilon)}{\text{d} \hat \upsilon }\right)  \dot {\hat \upsilon }\right]  - \frac{1}{\gamma}\tilde r \dot{\hat r}.
  \end{align}
Combining Assumptions \ref{ass:1} and \ref{ass:2} with the first point of Proposition \ref{pro:7}, it follows that:
\begin{align}
  & \left\lVert \dot \upsilon - \dot {\hat \upsilon } \right\rVert = \left\lVert d - \lambda \tilde x \right\rVert \leq \Delta_1, \\
  & \left\lVert \frac{\text{d} h_{\hat \phi}(\upsilon)}{\text{d} \upsilon}  - \frac{\text{d}\hat h_{\hat \phi}(\hat \upsilon)}{\text{d} \hat \upsilon }\right\rVert  \leq L_2 \left\lVert \tilde x\right\rVert  \leq \Delta_2,
\end{align}
where $L_2$ is a Lipschitz constant of $\frac{\text{d} h_{\hat \phi}(\upsilon)}{\text{d} \upsilon} $, $\Delta_{1}$ and $\Delta_{2}$ are positive constants. Defining $r:= \max\{\Delta_{1},\Delta_{2}\}$ and applying Lemma \ref{lem:3}, it is derived that
\begin{align} \label{eq:dV_2 2}
  \dot V_2 \leq & \frac{\rho \varepsilon}{2 e\left(\rho - e\right) } \dot \eta - \frac{\varepsilon }{2 \left(\rho - e\right) } \dot \rho   + \frac{|\varepsilon| \rho}{2e(\rho-e)} \notag \\
  & \times \left(\left\lVert \frac{\text{d} h_{\hat \phi}(\upsilon)}{\text{d} \upsilon}\right\rVert + \left\lVert \dot {\hat \upsilon }\right\rVert - \left\lVert \frac{\text{d}\hat h_{\hat \phi}(\hat \upsilon)}{\text{d} \hat \upsilon}\right\rVert - \left\lVert \dot {\bar z} \right\rVert\right)   r  \notag \\
  & + \frac{|\varepsilon| \rho \tilde{r}}{2e(\rho-e)}\left(\left\lVert \frac{\text{d} \hat h_{\hat \phi}(\hat \upsilon)}{\text{d} \hat \upsilon}\right\rVert + \left\lVert  \dot{\bar z} \right\rVert \right) \notag   \\
  & +  \frac{\chi ^2\hat r^2 }{\sqrt{\chi^2 \hat r^2+ \epsilon^2}} + \epsilon - \frac{1}{\gamma}\tilde r \dot{\hat r}.
\end{align}
Applying the reverse triangle inequality yields:
\begin{align} \label{eq:reverse triangle inequality}
& \frac{|\varepsilon| \rho}{2e(\rho-e)}  \left(\left\lVert \frac{\text{d} h_{\hat \phi}(\upsilon)}{\text{d} \upsilon}\right\rVert + \left\lVert \dot {\hat \upsilon }\right\rVert - \left\lVert \frac{\text{d}\hat h_{\hat \phi}(\hat \upsilon)}{\text{d} \hat \upsilon}\right\rVert - \left\lVert \dot {\bar z} \right\rVert\right)   r  \notag 
\\
& \leq \frac{|\varepsilon| \rho}{2e(\rho-e)}  \left(\left\lVert \frac{\text{d} h_{\hat \phi}(\upsilon)}{\text{d} \upsilon} - \frac{\text{d}\hat h_{\hat \phi}(\hat \upsilon)}{\text{d} \hat \upsilon}\right\rVert + \left\lVert \dot {\hat \upsilon } - \dot {\bar z} \right\rVert\right) r \notag \\
& \leq \frac{|\varepsilon|^2 \rho^2}{8e^2(\rho-e)^2} + \frac{1}{2} \Delta ^2,
\end{align}
where it can be deduced that $\dot {\hat \upsilon } - \dot {\bar z}$ is also bounded, and $\Delta \geq \left(\left\lVert \frac{\text{d} h_{\hat \phi}(\upsilon)}{\text{d} \upsilon} - \frac{\text{d}\hat h_{\hat \phi}(\hat \upsilon)}{\text{d} \hat \upsilon}\right\rVert + \left\lVert \dot {\hat \upsilon } - \dot {\bar z} \right\rVert\right) r$ is a positive constant. Substituting \eqref{eq:eta_dot} and \eqref{eq:reverse triangle inequality} into \eqref{eq:dV_2 2} and applying Young's inequality yields:
\begin{align} \label{eq:dV_2 3}
  \dot V_2 \leq & -\frac{c}{2} \varepsilon^2 +  \frac{\varsigma}{\gamma} \tilde{r} \hat r + \frac{1}{2}\Delta^2 + \epsilon \notag \\
  \leq &  -\frac{c}{2} \varepsilon^2  -\frac{\varsigma}{2\gamma} \tilde{r}^2 + \frac{\varsigma}{2\gamma} {r}^2 + \frac{1}{2}\Delta^2 + \epsilon \notag \\
  \leq & - \zeta  V_2 + \Omega,
\end{align}
where $\zeta  = \min\{c,\varsigma\}$ and $\Omega: = \frac{\varsigma}{2\gamma} {r}^2 + \frac{1}{2}\Delta^2 + \epsilon$ is a positive constant. Integrating both sides of \eqref{eq:dV_2 3} yields that 
\begin{equation}
  \label{eq:V_2(t)}
  V_2(t) \leq \left(V_2(0) - \frac{\Omega}{\zeta}\right) \mathrm{e}^{-\zeta t} +  \frac{\Omega}{\zeta}.
\end{equation}
Equation (\ref{eq:V_2(t)}) implies that $V_2(\bar t^-)$ and thus $\varepsilon(\bar t^-)$ are bounded, contradicting the assumption $\lim_{t \to \bar t}\varepsilon(t) = \infty$. Hence, $\varepsilon$ and $\tilde r$ remain bounded for all $t \in [0,s_1)$, ensuring the reconstruction error $e$ strictly satisfies the prescribed performance criteria.

Then, if $s_1$ is a release time, an appropriate reselection of parameters ensures that $e(s_1) \in (0, \rho(s_1))$. Conversely, if $s_1$ is not a release time, the conditions $h_{\hat \phi}(\upsilon(s_1^-)) = h_{\hat \phi}(\upsilon(s_1))$ and $\hat h_{\hat \phi}(\hat \upsilon(s_1^-)) = \hat h_{\hat \phi}(\hat \upsilon(s_1))$ hold, keeping the reconstruction error within the prescribed performance bounds at $t=s_1$. In either scenario, repeating the aforementioned proof procedure establishes the validity of the conclusion over the subsequent interval $[s_1, s_2)$. Ultimately, this iterative reasoning proves that the conclusion holds across any arbitrary interval $[s_{j-1}, s_j)$ for $j=1,\dots ,q$. Then, as $e$ is bounded, combined with the local Lipschitz continuity of $h_{\hat{\phi}}$, it can be deduced that $\eta$ is also bounded. Since both $r$ and $\tilde{r}$ are bounded, it follows that $\hat{r}$ is bounded.
\end{proof}
\begin{remark}
  Regarding the design of the differentiator $\dot{z}$, one can simply employ the dynamics $\dot{z} = f(x) + \lambda(x - z)$. This design yields a bounded estimation error provided that the control effort $g(x)u$ remains bounded. We will implement this specific design in the subsequent simulation section. Notably, the differentiator serves as a modular component within the reconstructed CBF framework. It can be flexibly substituted with other advanced architectures, such as high-gain observers or unknown input observers. As long as the chosen estimator guarantees a bounded estimation error, the core conclusions of the reconstructed CBF remain entirely unaffected.
\end{remark}

Utilizing the reconstructed CBF, the following disturbance rejection controller is synthesized: 
\begin{equation}
\begin{aligned} \label{eq:qp}
&u = \mathop{\arg \min} \limits_{ u \in \mathbb{R}^{m}}  \frac{1}{2} u ^T Q u\\
&\begin{array}{r@{\quad}r@{}l@{\quad}l}
\text{s.t.} & \frac{\partial \hat h_{\hat \phi}}{\partial \hat x}\left(f(x) + g(x) u  + \lambda \tilde x \right) - \dot \eta + \frac{\partial \hat h_{\hat \phi}}{\partial t} \geq - \alpha(\hat h_{\hat \phi}), 
\end{array}
\end{aligned}
\end{equation}
where $Q \in \mathbb{R}^{m \times m}$ is a positive definite weight matrix.

\begin{theorem} \label{the:2}
  Consider the system \eqref{eq:sys} and a given STL specification $\phi$. Under Assumption \ref{ass:1}-\ref{ass:3}, by employing the CBF derived from the proposed recursive design, combined with the reconstructed CBF-based QP controller \eqref{eq:rCBF}, \eqref{eq:eta_dot} and \eqref{eq:qp}, $(x,0)\models \phi$ is guaranteed.
\end{theorem}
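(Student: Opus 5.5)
The proof chains three facts: the QP \eqref{eq:qp} keeps the reconstructed CBF nonnegative, the reconstruction error turns that into nonnegativity of the true CBF, and Theorem \ref{the:1} then gives satisfaction of $\phi$.

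\textbf{Step 1: forward invariance of $\hat h_{\hat\phi}\ge 0$.} I would work on each interval $[s_{j-1},s_j)$ of the switching sequence $\mathcal{S}_{\hat\phi}$, where $h_{\hat\phi}$ is continuously differentiable. Along the reference model, $\dot{\hat x}=f(x)+g(x)u+\lambda\tilde x$, so
\begin{equation*}
\dot{\hat h}_{\hat\phi}=\frac{\partial \hat h_{\hat\phi}}{\partial \hat x}\dot{\hat x}+\frac{\partial \hat h_{\hat\phi}}{\partial t}-\dot\eta .
\end{equation*}
This is exactly the left-hand side of the constraint in \eqref{eq:qp}. Crucially, the expression contains no disturbance $d$: $\tilde x$ is measurable because $x$ and $\hat x$ are both available. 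Whenever the QP is feasible, the constraint therefore reads $\dot{\hat h}_{\hat\phi}\ge-\alpha(\hat h_{\hat\phi})$, and Lemma \ref{lem:1}, applied with $\hat x$ as the state, gives forward invariance of $\{\hat h_{\hat\phi}\ge 0\}$ on that interval.

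For the initial condition, Assumption \ref{ass:3} gives $h_{\hat\phi}(x,0)>0$. I would initialize $\hat x(0)=x(0)$ and $\eta(0)$ so that $\hat h_{\hat\phi}(0)\ge0$ and $0<e(0)<\rho_0$; this is possible because $e(0)=\eta(0)$ and $h_{\hat\phi}(x,0)>0$.

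\textbf{Step 2: from $\hat h_{\hat\phi}$ to $h_{\hat\phi}$.} By Proposition \ref{pro:7} ii), $0<e(t)<\rho(t)$ on each interval. By definition $e=h_{\hat\phi}(\upsilon)-\hat h_{\hat\phi}$, so
\begin{equation*}
h_{\hat\phi}(x,t)=\hat h_{\hat\phi}+e>\hat h_{\hat\phi}\ge0 .
\end{equation*}
This positivity of $e$ is the mechanism that removes the need for a known disturbance bound; boundedness of $\eta$ and $\hat r$ from Proposition \ref{pro:7} keeps the closed-loop signals well defined.

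\textbf{Step 3: patching intervals and concluding.} At a non-release switching time, $h_{\hat\phi}$ and $\hat h_{\hat\phi}$ are continuous, so nonnegativity carries over and Step 1 restarts on the next interval. At a release time $T_{\hat\phi}(\mu_k)$, the pruning in Remark \ref{re:1} sets $h_{\mu_k}=+\infty$. This can only increase $h_{\hat\phi}$, because $\widehat{\min}$ is monotone and the additive $\xi$-terms are unchanged. The parameter resets imposed in the text, $0<e<\rho_\infty$ and $\hat h_{\hat\phi}\ge0$, then reinstate the hypotheses on the next interval. Induction over $j$ gives $h_{\hat\phi}(x,t)\ge0$ for all $t\in[0,T)$ with $T$ as in Remark \ref{re:1}, and Theorem \ref{the:1} yields $(x,0)\models\phi$.

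\textbf{Main obstacle.} The hardest point is the feasibility of the QP together with the well-posedness hypothesis in Proposition \ref{pro:7} ii). That proposition needs $\dot{\tilde z}$ bounded, which in turn needs $g(x)u$ bounded, so the argument is circular.

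My first attempt would be to assume $\frac{\partial \hat h_{\hat\phi}}{\partial \hat x}g(x)\neq0$ on the relevant set, which makes the single affine constraint always feasible. The QP solution is then the explicit closed form $u=Q^{-1}g^T\nabla^T \max\{0,\cdot\}/(\cdot)$, and one can argue boundedness of $u$ on compact sets over the finite horizon $[0,T)$. That bound closes the loop with Proposition \ref{pro:7}: bounded $u$ gives bounded $\dot{\tilde z}$, hence $e\in(0,\rho)$, hence bounded $\eta$ and $\dot\eta$, hence a bounded constraint right-hand side. I would phrase this as a bootstrap on a maximal existence interval and exclude finite escape using the boundedness results already established.
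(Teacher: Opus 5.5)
Your Steps 1--3 follow the paper's proof essentially step for step. You initialize so that $0<e(0)<\rho_0$ and $\hat h_{\hat\phi}(0)\ge 0$; your choice $\hat x(0)=x(0)$ is a special case of the paper's $\eta(0)=h_{\hat\phi}(\hat\upsilon(0))-h_{\hat\phi}(\upsilon(0))/2$, $\rho_0=h_{\hat\phi}(\upsilon(0))$. You then obtain $h_{\hat\phi}=\hat h_{\hat\phi}+e>0$ on each $[s_{j-1},s_j)$ from Lemma~\ref{lem:1} and Proposition~\ref{pro:7}, pass switching times by continuity or by monotone pruning plus parameter reselection, and conclude with Theorem~\ref{the:1}.

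The paper never addresses your ``main obstacle.'' It invokes Proposition~\ref{pro:7}~ii) without checking its bounded-$\dot{\tilde z}$ hypothesis, and it takes feasibility of \eqref{eq:qp} for granted. So Steps 1--3 alone already match the paper's level of rigor. Your proposed fix, however, does not work as sketched. Besides the extra hypothesis $\frac{\partial \hat h_{\hat\phi}}{\partial \hat x}g(x)\neq 0$, which is not in the theorem, the bootstrap is circular in an essential way. The term $-\frac{\rho\varepsilon}{4e(\rho-e)}$ in $\dot\eta$ tends to $+\infty$ as $e\to 0^+$; the $\hat r$-term has the same sign there. The QP constraint therefore forces $\|u\|\to\infty$ exactly in the regime that Proposition~\ref{pro:7} is supposed to exclude. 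Because $e$ ranges over the open interval $(0,\rho)$, a finite time horizon gives no bound on $u$. Nor can you use Proposition~\ref{pro:7} to rule out blow-up at the end of a maximal interval, since its constant $\Omega$ depends on $\sup\|g(x)u\|$ through $\Delta\ge\|\dot{\hat\upsilon}-\dot{\bar z}\|\,r$, and $\dot{\hat x}-\dot z$ contains $g(x)u$. Closing the loop would need a genuinely new a~priori estimate. For example, you would need a condition making the mismatch term $\bigl(\frac{\partial h_{\hat\phi}}{\partial x}(x,t)-\frac{\partial h_{\hat\phi}}{\partial x}(\hat x,t)\bigr)g(x)u$ in $\dot e$ small relative to $\dot\eta$, so that a large $\dot\eta$ actually pushes $e$ away from $0$. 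Neither the theorem nor the paper provides such a condition.
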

\begin{proof}
Under Assumption \ref{ass:3}, at $t=0$, it is always possible to choose the parameters such that $0 < e(0) < \rho_{0}$ and $\hat h_{\hat \phi} (\hat \upsilon(0) ,\eta(0) )\geq 0$. For instance, one can choose $\eta(0) = {h}_{\hat \phi}(\hat \upsilon(0)) - \frac{h_{\hat \phi}(\upsilon(0))}{2}$ and $\rho_{0} = h_{\hat \phi}(\upsilon(0))$. 
  According to Proposition \ref{pro:7} and Lemma \ref{lem:1}, the controller \eqref{eq:qp} renders the safe set $\mathcal{C}(t)$ forward invariant over the interval $[s_{0}, s_1)$.
  Based on the recursive design and the mechanism detailed in Remark \ref{re:1}, it follows that $h_{\hat \phi}(x,s_1) \geq h_{\hat \phi}(x,s_1^-) \geq 0$ holds strictly. 
If $s_1$ is not a release time, the parameters remain unchanged; otherwise, suitable parameters are reselected. In either case, $h_{\hat \phi}(\upsilon(s_1)) > \hat h_{\hat \phi}(\hat \upsilon(s_1),\eta(s_1)) \geq 0$ holds with the reconstruction error maintained within the performance bounds. Repeating this proof establishes the forward invariance of $\mathcal{C}(t)$ over any interval $[s_{j-1}, s_j)$.
Then, based on Theorem \ref{the:1}, Theorem \ref{the:2} can be proven.
\end{proof}

\section{SIMULATION} \label{sec:4}
Consider the following mobile robot model:
\begin{align} \label{eq:rb}
  & \dot{\boldsymbol{p}} = g(\theta)u + d, \, \dot \theta = \omega,
\end{align}
where ${\boldsymbol{p}} = [p_x, p_y]^T$ denotes the position of the robot, and $\theta$ is the heading angle. The matrix $g(\theta)$ is defined as $g(\theta) = \begin{bmatrix} \cos\theta & -l\sin\theta \\ \sin\theta & l\cos\theta \end{bmatrix}$, with $l = 0.036$ being a model parameter. The control input is defined as $u = [v, \omega]^T$, where $v$ and $\omega$ represent the linear and angular velocities, respectively. Additionally, velocity constraints $|v| \leq 2 $ and $|\omega| \leq 3 $ are imposed. The disturbance is set to $d = [-0.1, 0.2 \cos(0.5 t)]^T$.

We define the following nested STL task:
\begin{align} \label{eq:STL_complex}
  \phi: = &\mathrm{G}_{[0,10]}\mathrm{F}_{[0,5]} \mu_3 \wedge \mathrm{F}_{[5,6]}\mathrm{G}_{[1,2]} \mu_2 \wedge \mathrm{F}_{[12,13]}\left(\mu_3 \mathrm{U}_{[1,2]} \mu_1\right) \notag \\
  & \wedge \mathrm{G}_{[0,24]} \neg \mu_4 \wedge \mathrm{F}_{[0,24]} \mu_5.
\end{align}
The regions satisfying the predicate functions $h_{\mu_k} \ge 0$ for $k=1,\dots,5$ are depicted in Fig.~\ref{fig:2}. Based on the proposed method, we construct the following CBF:

\begin{figure}[!htb]
\centering
\includegraphics[width=8.87cm]{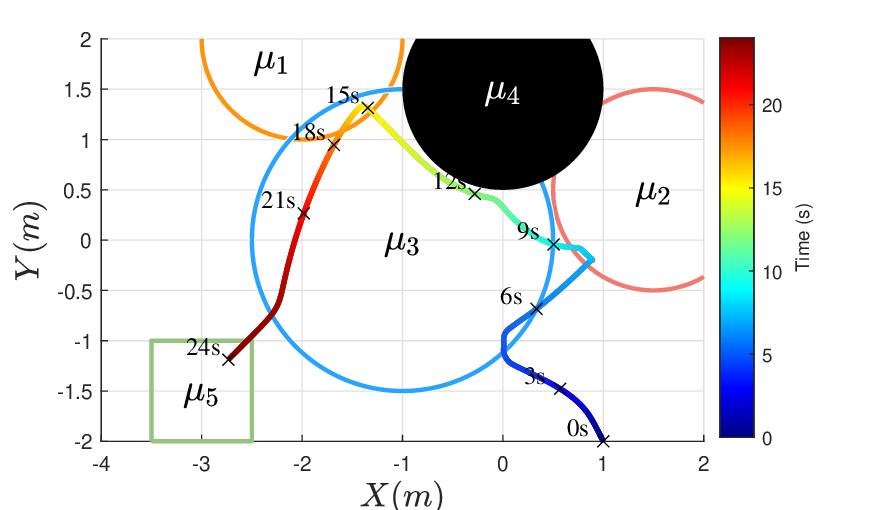}
\caption{The trajectory of a mobile robot with dynamics \eqref{eq:rb} under STL specification \eqref{eq:STL_complex}.}
\label{fig:2}
\end{figure}
\begin{figure}[!htb]
\centering
\subfloat[Reconstructed CBF $\hat h_{\hat \phi}$.]{\includegraphics[width=.48\linewidth]{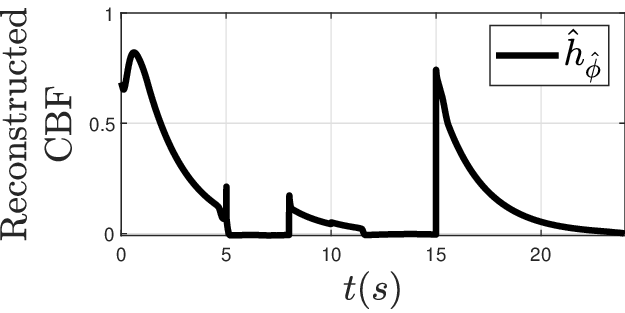}}\hspace{5pt}
\subfloat[Reconstruction error $e$.]{\includegraphics[width=.48\linewidth]{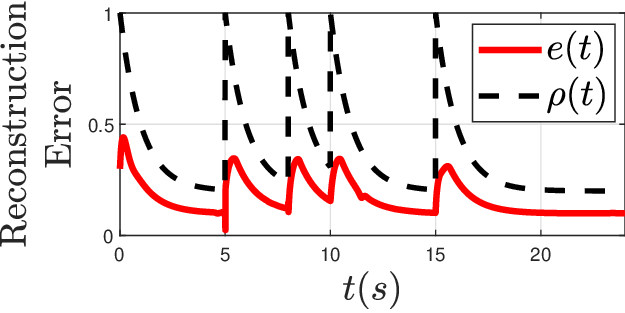}}\\
\caption{Curves of the reconstructed CBF and the reconstruction error.}
\label{fig:3}
\end{figure}
\vspace{-20pt}
\begin{align}
  & h_{\hat \phi}= \widehat {\min}\left(h_{\psi_{1}},\dots,h_{\psi_{6}}| \kappa = 10\right), \notag \\
  & h_{\psi_1} = h_{\mu_3} - 1.4 \sigma_1 + 7, \, h_{\psi_2}= h_{\mu_3} - 0.7 \sigma_2 + 7, \notag \\
  & h_{\psi_3}= h_{\psi_7} - 1.67 \sigma_3 + 10, \, h_{\psi_4}= h_{\psi_8} - 1.69 \sigma_4 + 22, \notag \\
  & h_{\psi_5}= -h_{\mu_4}, \, h_{\psi_6} = -h_{\mu_5} - \sigma_6 + 24, \notag \\
  & h_{\psi_7}= h_{\mu_2} - 2\sigma_7 + 2, \, h_{\psi_8} = \widehat {\min}\left(h_{\psi_{9}},h_{\psi_{10}}| \kappa = 10\right), \notag \\
  & h_{\psi_9} = h_{\mu_3} , \, h_{\psi_{10}} = h_{\mu_1} - 1.5\sigma_{10} + 3.
\end{align}

Besides, we set: $[\overline{t}_1, \underline{t}_1,\overline{t}_2, \underline{t}_2,\dots,\overline{t}_{10}, \underline{t}_{10}]=[0,5,0,10,0,6,0,13,0,0,0,24,6,7,13,13,13,13,13,15]$ and $\Pi _{\hat \phi}(\psi_k) = k $ for $k=1,\dots,10$. The parameters of the proposed controller are set as follows: $\lambda=10$, $c=0.01$, $\gamma = 0.01$, $\varsigma = 1$, $\rho_0 = 1$, $\rho_{\infty} = 0.2$, $\varrho = 1$, $\alpha(y)=y/2$, $Q = \mathrm{diag}\{1, l^2\}$, $\hat x(0) = z(0) = \boldsymbol{p}(0) = [1,-2]^T$, $\eta(0)=0.3$ and $\eta(T_{\hat \phi}(\mu_k))=0.1$ for any $\mu_k \in \mathcal{N}_{\hat \phi}$.

Fig.~\ref{fig:2} illustrates the trajectory of the mobile robot over the time interval $[0, 24]$. It is observed that the robot successfully visits the target regions ($\mu_1, \mu_2, \mu_3, \mu_5$) within the specified time intervals while strictly avoiding the obstacle region ($\mu_4$) at all times. This demonstrates that the robot's trajectory satisfies the given STL specification $\phi$ \eqref{eq:STL_complex}. Fig.~\ref{fig:3} presents the curves of the reconstructed CBF and the reconstruction error. It is evident that the reconstructed CBF remains non-negative at all times while the reconstruction error consistently stays within the prescribed performance bounds, indicating that the system strictly remains within the safe set.

\section{CONCLUSION} \label{sec:5}
In this work, a novel disturbance rejection control scheme is established for nonlinear systems under nested STL specifications. Utilizing the modified sTLT to guide the design process, we propose a novel recursive CBF design procedure that introduces sliding window variables to capture complex temporal relationships, thereby yielding explicit parameterized CBFs for nested STL formulas. Furthermore, a disturbance rejection QP controller based on reconstructed CBFs is proposed, ensuring that the system satisfies the CBF constraints under disturbances to ultimately accomplish the STL tasks. The reconstructed CBF-based approach requires no prior knowledge of the disturbances while relaxing the initial safety assumptions. Finally, the simulation demonstrates the effectiveness of the proposed control framework.

\addtolength{\textheight}{-12cm}   









\bibliographystyle{IEEEtran}
\bibliography{ref/ref}

\end{document}